\def\nbx{{\mathbf{x}}}
\def\nb0{{\mathbf{0}}}
\def\nb1{{\mathbf{1}}}
\def\ncalA{{\mathcal{A}}}
\def\ncalE{{\mathcal{E}}}
\def\ncalF{{\mathcal{F}}}
\def\ncalK{{\mathcal{K}}}
\def\ncalN{{\mathcal{N}}}
\def\ncalO{{\mathcal{O}}}
\def\ncalP{{\mathcal{P}}}
\def\ncalR{{\mathcal{R}}}
\def\ncalS{{\mathcal{S}}}
\def\ncalT{{\mathcal{T}}}
\def\ncalX{{\mathcal{X}}}
\def\nbbE{{\mathbb{E}}}
\def\nbbP{{\mathbb{P}}}
\def\nbbR{{\mathbb{R}}}
\def\nfrakR{{\mathfrak{R}}}
\newtheorem{lemma}{Lemma}
\newtheorem{definition}{Definition}
\newtheorem{theorem}{Theorem}
\newtheorem{cor}{Corollary}
\newtheorem{remark}{Remark}
\def\PPPok{\Phi_k}							
\def\PPPo{\Phi}
\def\PPPak{\Phi_k^{\rm (a)}}				
\def\PPPa{\Phi^{\rm (a)}} 
\def\PPPaj{\Phi_j^{\rm (a)}}				
\def\densityok{\lambda_k}
\def\densityak{\lambda_k^{\rm (a)}}
\def\E{\mathbb{E}}
\def\pc{\mathtt{P_c}}
\def\rc{\mathtt{R_c}}   
\def\R{\mathbb{R}}
\def\T{\beta}							
\def\sir{\mathtt{SIR}}
\def\calA{\mathcal{A}}
\begin{document}

\graphicspath{{./Figures/}}
\title{Fundamentals of Heterogeneous Cellular Networks with Energy Harvesting}
\author{Harpreet S. Dhillon, Ying Li, Pavan Nuggehalli, Zhouyue Pi, and Jeffrey G. Andrews\thanks{
H. S. Dhillon and J. G. Andrews are with the Wireless Networking and Communications Group (WNCG), The University of Texas at Austin, TX (email: dhillon@utexas.edu and jandrews@ece.utexas.edu). Y. Li and Z. Pi are with Samsung Research America, Richardson, TX (email: \{yli2, zpi\}@sta.samsung.com). P.~Nuggehalli is with the Mobile and Wireless Group of Broadcom Corporation, Sunnyvale, CA (email: pavann@broadcom.com). 

This work was done while the first author was with Samsung Research America, Richardson, TX. A part of this paper is accepted for presentation at IEEE Globecom 2013 in Atlanta, GA~\cite{DhiLiC2013}. \hfill Manuscript updated: \today.
} }

\maketitle

\begin{abstract}
We develop a new tractable model for $K$-tier heterogeneous cellular networks (HetNets), where each base station (BS) is powered solely by a self-contained energy harvesting module.  The BSs across tiers differ in terms of the energy harvesting rate, energy storage capacity, transmit power and deployment density. Since a BS may not always have enough energy, it may need to be kept OFF and allowed to recharge while nearby users are served by neighboring BSs that are ON.  We show that the fraction of time a $k^{th}$ tier BS can be kept ON, termed {\em availability} $\rho_k$, is a fundamental metric of interest. Using tools from random walk theory, fixed point analysis and stochastic geometry, we characterize the set of $K$-tuples $(\rho_1,\rho_2,\ldots\rho_K)$, termed the {\em availability region}, that is achievable by general uncoordinated operational strategies, where the decision to toggle the current ON/OFF state of a BS is taken independently of the other BSs. If the availability vector corresponding to the optimal system performance, e.g., in terms of rate, lies in this availability region, there is no performance loss due to the presence of unreliable energy sources. As a part of our analysis, we model the temporal dynamics of the energy level at each BS as a birth-death process, derive the energy utilization rate, and use hitting/stopping time analysis to prove that there exists a fundamental limit on $\rho_k$ that cannot be surpassed by any uncoordinated strategy. 
\end{abstract}

\section{Introduction}
The possibility of having a self-powered BS is becoming realistic due to several parallel trends. First, BSs are being deployed ever-more densely and opportunistically to meet the increasing capacity demand~\cite{AndJ2013}. The new types of BSs, collectively called ``small cells'', cover much smaller areas and hence require significantly smaller transmit powers compared to the conventional macrocells. Second, due to the increasingly bursty nature of traffic, the loads on the BSs will experience massive variation in space and time~\cite{DhiGanJ2013}. In dense deployments, this means that many BSs can, in principle, be turned OFF most of the time and only be requested to wake up intermittently based on the traffic demand. Third, energy harvesting techniques, such as solar power, are becoming cost-effective compared to the conventional sources~\cite{WorCenM2006}. This is partly due to the technological improvements and partly due to the market forces, such as increasing taxes on conventional power sources, and subsidies and regulatory pressure for greener techniques. Fourth, high-speed wireless backhaul is rapidly becoming a reality for small cells, which eliminates the need for other wired connections~\cite{HurKimC2011}. Therefore, being able to avoid the constraint of requiring a wired power connection is even more attractive, since it would open up entire new categories of low-cost ``drop and play'' deployments, especially of small cells.

\subsection{Related Work}
The randomness in the energy availability at a transmitter demands significant rethinking of conventional wireless communication systems. There are three main directions taken in the literature to address this challenge, which we order below in terms of complexity and realism. The first considers a relatively simple setup consisting of single full-buffer isolated link, and study optimal transmission strategies under a given energy arrival process~\cite{HoZhaJ2012,OzeTutJ2011,TutYenJ2012}. The effect of data arrivals can be additionally incorporated by considering two consecutive queues at the transmitter, one for the data and the other for the energy arrivals~\cite{ShaMukJ2010,YanUluJ2012}. 

Second, a natural extension of an isolated link, considers a broadcast channel, where a single isolated transmitter serves multiple users. Again one can assume full-buffer at the transmitter so that the transmission strategies need to be adapted only to the energy arrival process, e.g., in~\cite{YanOzeJ2012}. More realistically, one can relax the full-buffer assumption to explicitly consider data arrivals as discussed above for the isolated link, and optimize various metrics, e.g., minimize packet transmission delay~\cite{AntUysJ2011}, or maximize system throughput~\cite{GatGeoJ2010}.

The third and least investigated direction is to consider multiple self-powered transmitters, which significantly generalizes the above two directions. Generally speaking, the main goal is to adapt transmission schemes based on the energy and load variations across both time and space. While some progress has been recently made in advancing the understanding of mobile ad-hoc networks (MANETs) with self-powered nodes, see~\cite{HuaJ2013,HuaC2011} and references therein, our understanding of cellular networks in a similar setting is severely limited. This is partly due to the fact that conventional cellular networks consisted of big macro BSs that required fairly high power, and it made little sense to study them in the context of energy harvesting. As discussed earlier, this is not the case with a HetNet, which may support ``drop and play'' deployments, especially of small cells, in the future. Comprehensive modeling and analysis of this setup is the main focus of this paper.  

To capture key characteristics of HetNets, such as heterogeneity in infrastructure, and increasing uncertainty in BS locations, we consider a general $K$-tier cellular network with $K$ different classes of BSs, where the BS locations of each tier are sampled from an independent Poisson Point Process (PPP). This model was proposed for HetNets in~\cite{DhiGanC2011,DhiGanJ2012}, with various extensions and generalizations in~\cite{JoSanJ2012,MukJ2012,MadResC2011}. The model, although simple, has been validated as reasonable since then both by empirical evidence~\cite{TayDhiC2012} and theoretical arguments~\cite{BlaKarJ2012}. Due to its realism and tractability, it has become an accepted model for HetNets, see~\cite{ElSHosJ2013} for a detailed survey. 

\subsection{Contributions}

{\em Tractable and general system model.} We propose a general system model consisting of $K$ classes of self-powered BSs, which may differ in terms of the transit power, deployment density, energy harvesting rate and energy storage capacity. Due to the uncertainty in the energy availability, a BS may need to be kept OFF and allowed to accumulate enough energy before it starts serving its users again. In the meanwhile, its load is transferred to the neighboring BSs that are ON. Thus, at any given time a BS can be in either of the two {\em operational states}: ON or OFF. In this paper, we focus on {\em uncoordinated operational strategies}, where the operational state of each BS is toggled independently of the other BSs. For tractability, we assume that the network operates on two time scales: i) {\em long time scale}, over which the decision to turn a BS ON or OFF is taken, and ii) {\em short time scale}, over which the scheduling and cell selection decisions are taken. As discussed in Section~\ref{sec:availability}, this distinction facilitates analysis in two ways: a) it allows us to assume that the operational states of the BSs are static over short time scale, and b) it allows us to consider the average effects of cell selection over long time scale.

{\em Availability region.} We show that the fraction of time a $k^{th}$ tier BS can be kept in the ON state, termed the {\em availability} $\rho_k$, is a key metric for self-powered cellular networks. Using tools from random walk theory, fixed point analysis, and stochastic geometry, we characterize the set of $K$-tuples $(\rho_1, \rho_2, \ldots \rho_K)$, termed the {\em availability region}, that are achievable with a set of general uncoordinated strategies. Our analysis involves modeling the temporal dynamics of the energy level at each BS as a birth-death process, deriving energy utilization rate for each class of BSs using stochastic geometry, and using hitting/stopping time analysis for a Markov process to prove that there exists a fundamental limit on the availabilities $\{\rho_k\}$, which cannot be surpassed by any uncoordinated strategy. We also construct an achievable scheme that achieves this upper limit on availability for each class of BSs.

{\em Notion of optimality for self-powered HetNets.} The characterization of exact availability region lends a natural notion of optimality to self-powered HetNets. Our analysis concretely demonstrates that if the $K$-tuple $(\hat{\rho}_1, \hat{\rho}_2, \ldots, \hat{\rho}_K)$ corresponding to the optimal performance of the network, e.g., in terms of downlink rate, lies in the availability region, the performance of the HetNet with energy harvesting is fundamentally the same as the one with reliable energy sources. Using recent results for downlink rate distribution in HetNets~\cite{DhiAndJ2013,SinDhiJ2013}, we also show that it is not always optimal from downlink data rate perspective to operate the network in the regime corresponding to the maximum availabilities, i.e., it may be preferable to keep a certain fraction of BSs OFF despite having enough energy.

\section{System Model} \label{sec:sysmod}

\begin{table}
\centering
\caption{Notation Summary}
\label{table:notationtable}
\begin{tabulary}{\columnwidth}{ |c | C | }
\hline
    \textbf{Notation} & \textbf{Description} \\ \hline
    $\ncalK$		&	Set of indices for BS tiers, i.e., $\ncalK = \{1, 2, \ldots, K\}$ \\ \hline
    $\PPPok$, $\densityok$; $\PPPo$	& 	Independent PPP modeling locations of $k^{th}$ tier BSs, its density; set of all BSs, i.e., $\PPPo = \cup_{k \in \ncalK} \PPPok$ \\ \hline
    $\Phi_u; \lambda_u$					& An independent PPP modeling user locations, density of users \\ \hline
    $\mu_k$; $\nu_k$; $N_k$ &		Energy harvesting rate, utilization rate, and energy storage capacity of a $k^{th}$ tier BS \\ \hline
    $\rho_k; \nfrakR$ 	& 	Availability of $k^{th}$ tier BSs; availability region \\ \hline
    $\PPPak, \densityak; \PPPa$ 	&	Independent PPP modeling the $k^{th}$ tier BSs that are available, its density $\densityak = \rho_k \densityok$; all available BSs, i.e., $\PPPa = \cup_{k \in \ncalK} \PPPak$ \\ \hline
    $P_k$			& 	Downlink transmit power of a $k^{th}$ tier BS to each user in each resource block \\ \hline
    $h_{k}; \ncalX_{k}; \alpha$			&  Small scale fading gain $h_{k}\sim \exp(1)$; large scale shadowing gain (general distribution) from a $k^{th}$ tier BS; path loss exponent \\ \hline  
    $x_k^{*(z)}, x^{*(z)}$	&	Candidate serving BS in $\PPPak$ for user at $z \in \Phi_u$, serving BS for $z \in \Phi_u$ \\ \hline
    $\pc; \beta$			& Coverage probability; target $\sir$ \\ \hline
    $\rc; \ncalT$	& Rate coverage; target rate \\ \hline
\end{tabulary}

\end{table}

\subsection{System Setup and Key Assumptions}
We consider a $K$-tier cellular network consisting of $K$ different classes of BSs. For notational simplicity, define $\ncalK = \{1,2,\ldots,K\}$. The locations of the BSs of the $k^{th}$ tier are modeled by an independent PPP $\PPPok$ of density $\densityok$. Each BS has an energy harvesting module and an energy storage module, which is its sole source of energy. The BSs across tiers may differ both in terms of how fast they harvest energy, i.e., the {\em energy harvesting rate} $\mu_k$ joules/sec, and how much energy they can store, i.e., the energy storage capacity (or battery capacity) $N_k$ joules. We assume that the normalization of $\mu_k$ and $N_k$ is such that each user requires 1 joule of energy per sec. This assumption can be easily relaxed to incorporate users requiring more energy under sufficient randomization, but this case is not in the scope of this paper. For resource allocation, we assume an orthogonal partitioning of resources, e.g., time-frequency resource blocks in orthogonal frequency division multiple access (OFDMA), where each resource block is allocated to a single user. Due to orthogonal resource allocation, there is no intra-cell interference. Note that a user can be allocated multiple resource blocks as discussed in detail in the sequel. We further assume that a $k^{th}$ tier BS transmits to each user with a fixed power $P_k$ in each resource block, which may depend upon the energy harvesting parameters, although we do not study this dependence in this paper. The target $\sir$ $\T$ is the same for all the tiers.

The energy arrival process at a $k^{th}$ tier BS is modeled as a Poisson process with mean $\mu_k$. Since most energy harvesting modules contain smaller sub-modules, each harvesting energy independently, e.g., small solar cells in a solar panel, the net energy harvested can be modeled as a Binomial process, which approaches the Poisson process when the number of sub-modules grow large. Interestingly, this model has been validated using empirical measurements for a variety of energy harvesting modules~\cite{RouWriB2004}.  
Since the energy arrivals are random and the energy storage capacities are finite, there is some uncertainty associated with whether the BS has enough energy to serve users at a particular time. Under such a constraint, it is required that some of the BSs be kept OFF and allowed to recharge while their load is handled by the neighboring BSs that are ON. Besides, as discussed in the sequel, it may also be preferable to keep a BS OFF despite having enough energy. Therefore, a BS can be in either of the two {\em operational states}: ON or OFF. The decision to toggle the operational state from one to another is taken by the operational strategies that can be broadly categorized into two classes.

{\em Uncoordinated:} In this class of strategies, the decision to toggle the operational state, i.e., turn a BS ON or OFF, is taken by the BS independently of the operational states of the other BSs. For example, a BS may decide to turn OFF if its current energy level reaches below a certain predefined level and turn back ON after harvesting enough energy. The BS may additionally consider the time for which it is in the current state while making the decision. For instance, a BS may start a timer whenever the state is toggled and may decide to toggle it back when the timer expires or the energy level reaches a certain minimum value, whichever occurs first. This class will be the main focus of this paper.

{\em Coordinated:} In this class of strategies, the decision to toggle the state of a particular BS is dependent upon the states of the other BSs. For example, the BSs may be partitioned into small clusters where only a few BSs in each cluster are turned ON. The decision may be taken by some central entity based on the current load offered to the network. This is useful in the cases where the load varies by orders of magnitude across time, e.g., due to diurnal variation. A small fraction of BSs is enough to handle smaller load, with the provision of turning more ON as the load gradually increases. In addition to the load, other factors such as network topology and interference among BSs may also affect the decision.

For tractability, we define the following two time scales over which the network is assumed to operate.
\begin{definition}[Time scales]
The scheduling and cell association decisions are assumed to be taken over a time scale that is of the order of the scheduling block duration. We term this time scale as a short time scale. On the other hand, the operational policies that toggle the operational state of a BS are assumed to be defined on a much longer time scale. We will henceforth term this time scale as a long time scale. 
\end{definition}
As discussed in the sequel, this distinction is the key to tractability because of two reasons: i) it allows us to assume the energy states of the BSs to be static over short time scales, and ii) it allows us to consider the average effects of cell selection while determining the energy utilization rates over long time scales. Due to uncertainty in the energy availability or due to the optimality of a given performance metric, e.g., downlink rate, all the BSs in the network may not always be available to serve users. This is made precise by defining availability of a BS as follows.

\begin{definition}[Availability]
A BS is said to be available if it is in the ON state as a part of the operational policy and has enough energy to serve at least one user, i.e., has at least one unit of energy. The probability that a BS of tier $k$ is available is denoted by $\rho_k$, which may be different for each tier of BSs due to the differences in the capabilities of the energy harvesting modules and the load served. For notational simplicity, we denote the set of availabilities for the $K$ tiers by $\{\rho_k\}$.
\end{definition}

For uncoordinated strategies, it is reasonable to assume that the current operational state (ON or OFF) of a BS is independent of the other BSs, especially since the energy harvesting processes are assumed to be independent across the BSs. The coupling in the transmission of various BSs that arises due to interference and mobility is ignored. Under this independence assumption, the set of ON BSs of the $k^{th}$ tier form a PPP $\PPPak$ with density $\densityak = \lambda_k \rho_k$. This results from the fact that the independent thinning of a PPP leads to a PPP with appropriately scaled density~\cite{KinB1993}. As will be evident from the availability analysis in the next section, this abstraction is the key that makes this model tractable and leads to meaningful insights. 

\subsection{Propagation and Cell Selection Models}
For this discussion it is sufficient to consider only the BSs that are available, i.e., the ones that are in the ON state.  For notational ease, define $\PPPa = \cup_{k \in \ncalK} \PPPak$. The user locations are assumed to be drawn from an independent PPP $\Phi_u$ of density $\lambda_u$. More sophisticated non-uniform user distribution models~\cite{DhiGanJ2013b} can also be considered but are not in the scope of this paper. The received power at a user located at $z \in \Phi_u$ from a $k^{th}$ tier BS placed at $x_k \in \PPPak$ in a given resource block is
\begin{align}
P(z, x_k) = P_k h_{kx_k}^{(z)} \ncalX_{kx_k}^{(z)} \|x_k - z\|^{-\alpha},
\label{eq:Pow_received}
\end{align}
where $h_{kx_k}^{(z)} \sim \exp(1)$ models Rayleigh fading, $\ncalX_{kx_k}^{(z)}$ models large scale shadowing, and $\|x_k - z\|^{-\alpha}$ represents standard power-law path loss with exponent $\alpha$, for the wireless channel from $x_k \in \PPPak$ to $z \in \Phi_u$. Note that since $h_{kx_k}^{(z)}$ and $\ncalX_{kx_k}^{(z)}$ are both independent of the locations $x_k$ and $z$, we will drop $x_k$ and $z$ from the subscript and superscript, respectively, and denote the two random variables by $h_k$ and $\ncalX_k$, whenever the locations are clear from the context.

For cell selection, we assume that each user connects to the BS that provides the highest long term received power, i.e., small scale fading gain $h_{kx}^{(z)}$ does not affect cell selection. For a cleaner exposition, we denote the location of the candidate $k^{th}$ tier serving BS for $z \in \Phi_u$ by $x_k^{*(z)} \in \PPPak$, which is
\begin{align}
x_k^{*(z)} = \arg \max_{x \in \PPPak} P_k \ncalX_{xk}^{(z)} \|x-z\|^{-\alpha}.
\end{align}
A user $z \in \Phi_u$ now selects one of these $K$ candidate serving BSs based on the average received signal power, i.e., the location of the serving BS $x^{*(z)} \in \{x_k^{*(z)}\}$ is
\begin{align}
x^{*(z)} = \arg \max_{x\in \{x_k^{*(z)}\} } P_k \ncalX_{kx}^{(z)} \|x-z\|^{-\alpha}.
\end{align}

Owing to the displacement theorem for PPPs~\cite{BacBlaB2009}, any general distribution of $\ncalX_{k}$ can be handled in the downlink analysis of a typical user as long as $\nbbE\left[\ncalX_{k}^{\frac{2}{\alpha}} \right] < \infty$. This is formally discussed in detail in~\cite{KeeBlaC2013,DhiAndJ2013}. The most common assumption for large scale shadowing distribution is lognormal, where $\ncalX_k = 10^{\frac{X_k}{10}}$ such that $X_{k} \sim \ncalN(m_k, \sigma_k^2)$, where $m_k$ and $\sigma_k$ are respectively the mean and standard deviation in dB of the shadowing channel power. For lognormal distribution, $\nbbE\left[\ncalX_k^{\frac{2}{\alpha}} \right] = \exp \left(\frac{\ln 10}{5} \frac{m_k}{\alpha} + \frac{1}{2} \left(\frac{\ln 10}{5} \frac{\sigma_k}{\alpha} \right)^2 \right)$, which can be easily derived using moment generating function (MGF) of Gaussian distribution~\cite{DhiAndJ2013}. The fractional moment is clearly finite if both the mean and standard deviation of the normal random variable $X_{k}$ are finite. For this system model, we now study the availabilities of different classes of BSs.

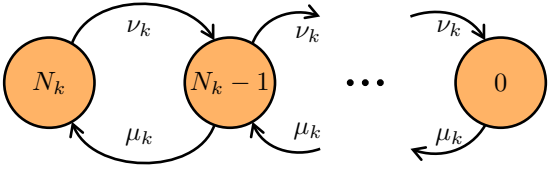
\begin{figure}[t]
\centering
\begin{tikzpicture}[scale=.8,>=angle 60, thick]
\pgfsetlinewidth{1}	
\coordinate (A1) at (0,0);
\coordinate (A2) at (3,0);
\coordinate (A3) at (7.5,0);
\coordinate (B) at (5.25,0);		

\draw [fill=orange!60!] (A1) circle (.75); \node at (A1) {$N_k$};
\draw [fill=orange!60!] (A2) circle (.75); \node at (A2) {$N_k-1$};
\draw [fill=orange!60!] (A3) circle (.75); \node at (A3) {$0$};
\filldraw [black] ($(B)-(.25,0)$) circle (1pt);
\filldraw [black] (B) circle (1pt);
\filldraw [black] ($(B)+(.25,0)$) circle (1pt);

\draw [->] ($(A1)+cos(60)*(.75,0)+sin(60)*(0,.75)$) to [out=70,in=120] ($(A2)+cos(110)*(.75,0)+sin(110)*(0,.75)$); \node at ($(A1)+(1.5,0.9)$) {$\nu_{k}$};

\draw [->] ($(A2)+cos(-110)*(.75,0)+sin(-110)*(0,.75)$) to [out=-110,in=-70] ($(A1)+cos(-60)*(.75,0)+sin(-60)*(0,.75)$); \node at ($(A1)+(1.5,-0.9)$) {$\mu_{k}$};

\draw [->] ($(A2)+cos(60)*(.75,0)+sin(60)*(0,.75)$) to [out=70,in=160] ($(A2)+(1.5,1.1)$); \node at ($(A2)+(1.3,.8)$) {$\nu_{k}$};

\draw [<-] ($(A2)+cos(-60)*(.75,0)+sin(-60)*(0,.75)$) to [out=-70,in=200] ($(A2)+(1.5,-1.1)$); \node at ($(A2)+(1.3,-.8)$) {$\mu_{k}$};

\draw [->] ($(A3)+(-1.5,+1.1)$) to [out=20,in=120] ($(A3)+cos(110)*(.75,0)+sin(110)*(0,.75)$); \node at ($(A3)+(-.85,0.9)$) {$\nu_{k}$};

\draw [<-] ($(A3)+(-1.5,-1.1)$) to [out=-20,in=-120] ($(A3)+cos(-110)*(.75,0)+sin(-110)*(0,.75)$); \node at ($(A3)+(-.85,-0.9)$) {$\mu_{k}$};
\end{tikzpicture}
\caption{Birth-death process modeling the temporal dynamics of the energy available at a $k^{th}$ tier BS.}
\label{fig:BirthDeath_fig}
\end{figure}

\section{Availability Analysis} \label{sec:availability}

The first challenge in studying the model introduced in the previous section lies in characterizing how the energy available at the BS changes over time. Without loss of generality, we index the energy states of a $k^{th}$ tier BS as $0, 1, \ldots, N_k$, and model the temporal dynamics as a continuous time Markov chain (CTMC), in particular a birth-death process, as shown in Fig.~\ref{fig:BirthDeath_fig}. When the BS is ON, the energy increases according to the energy harvesting rate and decreases at a rate that depends upon the number of users served by that BS. When the BS is OFF, it does not serve any users and hence the birth-death process reduces to a birth-only process. We now derive a closed form expression for the rate $\nu_k$ at which the energy is utilized at a typical $k^{th}$ tier BS.

\subsection{Modeling Energy Utilization Rate}
Before modeling the energy utilization rate, there are two noteworthy points. First, if a BS is not available, the load originating from its original area of coverage is directed to the nearby BSs that are available, thus increasing their effective load. Equivalently, the coverage areas of the BSs that are available get expanded to cover for the BSs that are not available, as shown in Fig.~\ref{fig:coverageregions}. The second one is related to the control channel coverage and given in the following remark. Recall that control channel coverage $\pc$ is the probability that the received signal-to-interference-ratio ($\sir$) is greater than the predefined minimum $\sir$ needed to establish a connection with the BS. Thus the users that are not in control channel coverage cannot enter the network and hence cannot access the data channels. Therefore, these users do not account for any additional energy expenditure at the BS.

\begin{remark}[Control channel coverage] \label{rem:Pc}
The control channel coverage $\pc$ is independent of the densities of the BSs in an interference-limited network when the target SIR is the same for all tiers~\cite{JoSanJ2012,DhiGanJ2012,DhiAndJ2013}. While this result will be familiar to those exposed to recent coverage probability analysis using stochastic geometry, it is not directly required in this section except the interpretation that the density of users effectively served by the network is independent of the effective densities of the BSs and hence independent of $\{\rho_k\}$. We will validate this claim in Section~\ref{sec:performance}.
\end{remark}

\begin{figure}
\centering
\includegraphics[width=.8\columnwidth]{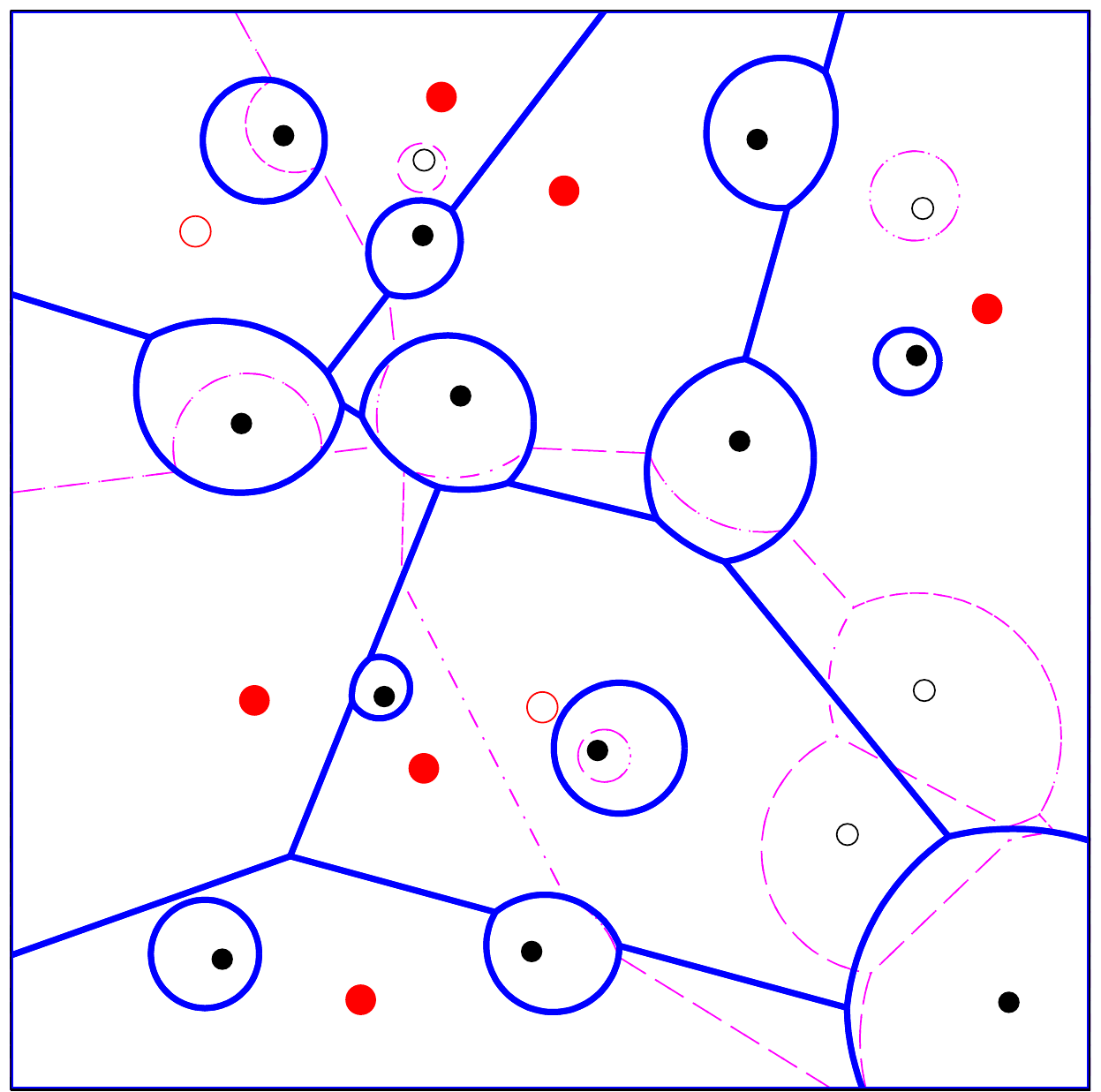}
\caption{Coverage regions for a two-tier energy harvesting cellular network (averaged over shadowing). The unavailable BSs are denoted by hollow circles. The thin lines form coverage regions for the baseline case assuming all the BSs were available.}
\label{fig:coverageregions}
\end{figure}

Assuming fixed energy expenditure for control signaling, only the users that are in control channel coverage will result in additional energy expenditure at the BS. As remarked above, the density of such users is $\pc \lambda_u$. Each user is assumed to require 1 joule of energy per sec such that the net energy utilization process at each BS can be modeled as a Poisson process with mean defined by the average number of users it serves. It should be noted that the assumption of 1 joule energy requirement is without any loss of generality and is made to simplify the notation. To find the average number of users served by a typical BS of each class, we first need to define its service region whose statistics such as its area will, in general, be different for different classes of BSs due to the differences in the transmit powers as evident from Fig.~\ref{fig:coverageregions}. The service region can be formally defined as follows.
\begin{definition}[Service region]
The service region $\calA_k(x_k) \subset \nbbR^2$ of the $k^{th}$-tier BS located at $x_k \in \PPPak$ is $\calA_k(x_k) =$
\begin{align}
\left\{ z \in \R^2:  x_k = \arg \max_{x \in \{x_j^{*(z)}\}} P_j \ncalX_j^{(z)} \|x - z\|^{-\alpha}, \right. \nonumber \\
{\rm where}\ \left. x_j^{*(z)} =  \arg \max_{x \in \PPPaj} P_j \ncalX_j^{(z)} \|x-z\|^{-\alpha} \right\}.
\end{align}
\end{definition}
We now derive the average area of the service region of a typical BS of each tier in the following Lemma.

\begin{lemma}[Average area of the service region] \label{lem:avgservicearea}
The average area of the service region of a $k^{th}$ tier typical BS is given by
\begin{align}
\E[|\calA_k|] = \frac{ \nbbE \left[\ncalX_k^{\frac{2}{\alpha}} \right] P_k^{\frac{2}{\alpha}}}{\sum_{j =1}^K  \rho_j \lambda_j \nbbE \left[\ncalX_j^{\frac{2}{\alpha}} \right] P_j^{\frac{2}{\alpha}}}.
\end{align}
\end{lemma}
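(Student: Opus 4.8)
\emph{Proof plan.} The idea is to avoid computing the geometry of $\calA_k$ directly and instead identify $\rho_k\lambda_k\,\E[|\calA_k|]$ with the probability that a typical user is served by a tier-$k$ BS, which is in turn computed via the displacement theorem. First I would reduce the cell-selection rule to a one-dimensional nearest-point problem. For each tier $j$, apply the displacement/marking theorem for PPPs~\cite{BacBlaB2009} to the marked process $\PPPaj$ (marks $\ncalX_j$, density $\rho_j\lambda_j$ from the thinning observation already made in the text): the map $x\mapsto V_x:=\|x\|^2(P_j\ncalX_j)^{-2/\alpha}=\big(P_j\ncalX_j\|x\|^{-\alpha}\big)^{-2/\alpha}$ pushes $\PPPaj$ forward to a homogeneous PPP on $\R$ of intensity $c_j\,\nrmd v$, where
\begin{align}
c_j \;=\; \pi\,\rho_j\lambda_j\, P_j^{\frac{2}{\alpha}}\,\nbbE\!\left[\ncalX_j^{\frac{2}{\alpha}}\right];
\end{align}
the factor $\pi$ comes from $x\mapsto\|x\|^2$ and the factor $P_j^{2/\alpha}\nbbE[\ncalX_j^{2/\alpha}]$ from averaging out the i.i.d.\ mark (this is exactly where finiteness of $\nbbE[\ncalX_j^{2/\alpha}]$, recalled in the text, is needed). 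Since $t\mapsto t^{-2/\alpha}$ is decreasing, maximizing received power is equivalent to minimizing $V_x$, so the candidate serving BS $x_j^{*(o)}$ corresponds to the smallest point of the $j$-th transformed process and the actual serving BS to the smallest point of the superposition $\bigcup_{j}\{V_x:x\in\PPPaj\}$, a PPP of intensity $(\sum_j c_j)\,\nrmd v$.

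Next I would compute the association probability. The smallest point of the $j$-th transformed process is $\mathrm{Exp}(c_j)$-distributed, and these variables are independent across $j$ because the $\PPPaj$ are independent. Hence the typical user at the origin is served by a tier-$k$ BS with probability
\begin{align}
a_k \;=\; \frac{c_k}{\sum_{j=1}^K c_j} \;=\; \frac{\rho_k\lambda_k\,P_k^{\frac{2}{\alpha}}\,\nbbE[\ncalX_k^{\frac{2}{\alpha}}]}{\sum_{j=1}^K \rho_j\lambda_j\, P_j^{\frac{2}{\alpha}}\,\nbbE[\ncalX_j^{\frac{2}{\alpha}}]}.
\end{align}
(This per-tier association probability also appears in~\cite{JoSanJ2012,DhiAndJ2013}; one could simply cite it.)

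Finally I would run a mass-transport argument. The service regions $\{\calA_k(x_k):x_k\in\PPPak,\,k\in\ncalK\}$ tessellate $\R^2$ almost surely, so writing $R_k=\bigcup_{x\in\PPPak}\calA_k(x)$ and fixing a bounded window $W\subset\R^2$, stationarity and Fubini give $\E[\,|W\cap R_k|\,]=|W|\,\P(o\in R_k)=|W|\,a_k$. On the other hand, since the cells are disjoint, $|W\cap R_k|=\sum_{x\in\PPPak}|\calA_k(x)\cap W|$; applying Campbell's formula for the Palm measure of $\PPPak$, together with Slivnyak's theorem (adding a point at $x$ to the otherwise-independent configuration) and Fubini, yields $\E[\,|W\cap R_k|\,]=\rho_k\lambda_k\int_{\R^2}\E[\,|\calA_k^{(x)}\cap W|\,]\,\nrmd x=\rho_k\lambda_k\,|W|\,\E[|\calA_k|]$, where $\calA_k^{(x)}$ is the cell generated by a BS placed at $x$ and the last equality uses $\int_{\R^2}|A\cap(W-x)|\,\nrmd x=|A||W|$. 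Equating the two expressions and cancelling $\rho_k\lambda_k|W|$ gives $\E[|\calA_k|]=a_k/(\rho_k\lambda_k)$, which is the claimed formula.

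I expect the third step to be the main obstacle: one must be careful in invoking Palm calculus/Slivnyak to convert ``expected area of the tier-$k$ region per unit area'' into $\rho_k\lambda_k\,\E[|\calA_k|]$, and in justifying the tessellation and integrability facts that make Fubini legitimate. The bookkeeping in the first step (getting the exponent $2/\alpha$ and the constant $\pi$ correct when composing $x\mapsto\|x\|^2$ with the mark-averaging) is the other place where care is required, but it is routine.
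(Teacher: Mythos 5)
Your argument is correct, but it takes a genuinely different route from the paper's. The paper proves the lemma head-on: after reducing to a BS at the origin via stationarity and Slivnyak, it writes $|\calA_k(0)|=\int_{\R^2}\nb1\left(z\in\calA_k(0)\right){\rm d}z$, evaluates the expectation of the product of indicators tier-by-tier with the PGFL of each $\PPPaj$ (producing terms of the form $\exp\left(-\pi\|z\|^2\,\rho_j\lambda_j\,\nbbE[\ncalX_j^{2/\alpha}]\,(P_j/P_k\ncalX_k)^{2/\alpha}\right)$), and then integrates over $z$ and averages over $\ncalX_k$ to land directly on the stated formula. You instead first compute the per-tier association probability $a_k$ of the typical user via the displacement theorem (your one-dimensional reduction with intensities $c_j=\pi\rho_j\lambda_j P_j^{2/\alpha}\nbbE[\ncalX_j^{2/\alpha}]$ and independent exponential minima, which indeed matches the association probabilities already available in the cited works), and then convert this user-side quantity into the BS-side mean cell area through the exchange identity $\rho_k\lambda_k\,\E[|\calA_k|]=\P(o\in R_k)=a_k$, justified by Campbell's formula, Slivnyak, and the translation covariance $\calA_k^{(x)}\stackrel{d}{=}x+\calA_k^{(o)}$. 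Both proofs rest on Palm calculus and the same moment condition $\nbbE[\ncalX_j^{2/\alpha}]<\infty$; the paper's computation is shorter and self-contained, while yours is more modular and structurally informative: the mass-transport step is a general fact about stationary tessellations ("mean cell area equals association probability divided by density") that would survive replacing the PPP by another stationary point process, at the cost of the Campbell/Slivnyak and Fubini bookkeeping you correctly identify as the delicate part. I see no gap: the intensity constants, the exponential-minimum argument, the a.s.\ existence and uniqueness of the serving BS, and the final Fubini step all check out, so the two derivations give the same result by different means.
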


\begin{proof}
The proof follows from the definition of the service area using basic ideas from Palm calculus and is given in Appendix~\ref{Appendix:avgservicearea}.
\end{proof}

Using the expression for average area, the average number of users served by a typical BS of $k^{th}$ tier, equivalently the energy utilization rate, is now given by the following corollary.
\begin{cor}[Energy utilization rate] \label{cor:energyutilizationrate}
The energy utilization rate, i.e., the number of units of energy required per second, at a typical BS of $k^{th}$ tier is given by
\begin{align}
\nu_k = \pc \lambda_u \E[|\calA_k|] = \frac{\pc \lambda_u \nbbE \left[\ncalX_k^{\frac{2}{\alpha}} \right] P_k^{\frac{2}{\alpha}}}{\sum_{j =1}^K  \rho_j \lambda_j \nbbE \left[\ncalX_j^{\frac{2}{\alpha}} \right] P_j^{\frac{2}{\alpha}}}, \label{eq:energyutilizationrate}
\end{align}
where recall that $\pc$ denotes the coverage probability, which is independent of the availabilities and will be calculated later in this section and is given by~\eqref{eq:pc_int_limited}.
\end{cor}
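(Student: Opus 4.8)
The plan is to read off $\nu_k$ as the mean number of users that a typical $k^{th}$ tier BS actually serves, and then substitute Lemma~\ref{lem:avgservicearea}. Since each admitted user draws exactly one unit of energy per second, the aggregate energy-draw process at a BS is (over the long time scale) a Poisson process whose rate is precisely this mean user count, so the whole task reduces to computing that count.

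First I would identify which users count. A user located at $z$ burdens the tagged BS $x_k\in\PPPak$ iff (i) $z$ falls in the service region $\calA_k(x_k)$ --- i.e.\ $x_k$ is $z$'s long-term strongest available BS --- and (ii) $z$ is in control-channel coverage, so that it is admitted to the network at all. By Remark~\ref{rem:Pc} event (ii) occurs with marginal probability $\pc$, and $\pc$ is independent of the BS densities, hence of $\{\rho_k\}$; moreover the sub-process of admitted users is an independent thinning of the user PPP $\Phi_u$, so its first-moment density is $\pc\lambda_u$ (the per-user coverage indicators are correlated through the common interference field, but stationarity and linearity of expectation still give exactly $\pc\lambda_u$ for the expected density of retained points).

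Second I would do the averaging in two stages to keep the two sources of randomness apart. Conditioning on the realization of $\PPPa$, the partition $\{\calA_j(x_j)\}$ it induces is deterministic, and the admitted-user process is independent of it; Campbell's formula (equivalently: a PPP of density $\pc\lambda_u$ restricted to a fixed region has mean point count equal to $\pc\lambda_u$ times the region's area) then gives conditional expected served-user count $\pc\lambda_u|\calA_k(x_k)|$ for the tagged BS. Averaging over $\PPPa$ under the Palm distribution of $\PPPak$ --- exactly the ``typical BS'' setting used to prove Lemma~\ref{lem:avgservicearea} --- yields $\nu_k=\pc\lambda_u\E[|\calA_k|]$, and plugging in the closed form of Lemma~\ref{lem:avgservicearea} gives \eqref{eq:energyutilizationrate}. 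This also makes transparent that the $\{\rho_k\}$-dependence of $\nu_k$ is confined to the denominator $\sum_j\rho_j\lambda_j\nbbE[\ncalX_j^{2/\alpha}]P_j^{2/\alpha}$, which is what the subsequent fixed-point analysis will exploit.

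The only real obstacle is bookkeeping: $\calA_k(x_k)$ is a random set built from $\PPPa$, and coverage is not independent across users, so multiplying probabilities naively is illegitimate. The fix is precisely the order of operations above --- average first over the independent user PPP and its fading marks (using only the marginal value $\pc$ and the independence from $\PPPa$), and only then over $\PPPa$ via Palm calculus, recycling verbatim the service-area computation already established for Lemma~\ref{lem:avgservicearea}. The clean factorization $\pc\cdot\lambda_u\cdot\E[|\calA_k|]$ is exactly the output of this two-stage averaging.
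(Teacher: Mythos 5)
Your argument is essentially the paper's own: the corollary is obtained there exactly as you do it, by taking the effectively served (covered) user density $\pc\lambda_u$, noting each such user draws one unit of energy per second, multiplying by the mean service area, and substituting Lemma~\ref{lem:avgservicearea}. The one caveat is that your intermediate claim that, conditioned on $\PPPa$, the admitted-user process is independent of the partition $\{\calA_j\}$ is stronger than what actually holds (coverage depends on $\PPPa$ through the interference, so $\pc$ and the cell geometry are correlated), but the paper performs the same decoupling of $\pc$ from $\E[|\calA_k|]$ implicitly (cf.\ Remark~\ref{rem:Pc}), so your route does not differ from theirs.
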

\begin{remark}[Invariance to shadowing distribution] \label{rem:shadowinvar_EnUtil}
From \eqref{eq:energyutilizationrate}, note that the energy utilization rate $\nu_k$ is invariant to the shadowing distribution of all the tiers if $\nbbE \left[ \ncalX_j^{\frac{2}{\alpha}} \right] = \nbbE \left[ \ncalX_k^{\frac{2}{\alpha}} \right]$, for all $j, k \in \ncalK$. For lognormal shadowing, this corresponds to the case when $m_j = m_k$ and $\sigma_j = \sigma_k$, for all $j, k \in \ncalK$.
\end{remark}
It should be noted that the availabilities of various tiers are still unknown and even if all the system parameters are given, it is still not possible to determine the energy utilization rate from the above expression. This will lead to fixed point expressions in terms of availabilities, which is the main focus of the rest of this section. It is also worth mentioning that the energy utilization rate derived above is just for the service of the active users. There are some other components of energy usage, e.g., control channel signaling and backhaul that are not modeled. While we can incorporate their effect in the current model by assuming fixed energy expenditure and deducting it directly from the energy arrival rate, a more formal treatment of these components is left for future work.

\subsection{Availabilities for a Simple Operational Strategy}
After deriving the energy utilization rate in Corollary~\ref{cor:energyutilizationrate} and recalling that the energy harvesting rate is $\mu_k$, we can, in principle, derive BS availabilities for a variety of uncoordinated operational strategies. We begin by looking at a very simple strategy in which a BS is said to be available when it is not in energy state $0$, i.e., it has at least one unit of energy. As shown later in this section, this strategy is of fundamental importance in characterizing the {\em availability region} for the set of general uncoordinated strategies. The availability of a $k^{th}$ tier BS under this strategy can be derived directly from the stationary distribution of the birth-death process as
\begin{align}
\rho_k &= 1 - \left(\frac{1-\frac{\mu_k}{\nu_k}}{1-\left(\frac{\mu_k}{\nu_k} \right)^{N_k+1}} \right) \\
&=1 - \left(\frac{1 - \frac{\mu_k \sum_{j =1}^K  \rho_j \lambda_j \nbbE \left[\ncalX_j^{\frac{2}{\alpha}} \right] P_j^{\frac{2}{\alpha}}}{\pc \lambda_u \nbbE \left[\ncalX_k^{\frac{2}{\alpha}} \right] P_k^{\frac{2}{\alpha}}} } {1 - \left(\frac{\mu_k \sum_{j =1}^K  \rho_j \lambda_j \nbbE \left[\ncalX_j^{\frac{2}{\alpha}} \right] P_j^{\frac{2}{\alpha}} } {\pc \lambda_u \nbbE \left[\ncalX_k^{\frac{2}{\alpha}} \right] P_k^{\frac{2}{\alpha}}}\right)^{N_k+1}} \right).
\label{eq:fp_k}
\end{align}

Interestingly we get a set of $K$ fixed point equations in terms of availabilities, one for each tier. Clearly $\rho_k \equiv 0$, $\forall\ k \in \ncalK$, is a trivial solution for this set of fixed point equations. However, this means that none of the BSs is available for service, which physically means that the users are in ``outage'' if there is no other, in particular positive, solution for the set of fixed point equations. We will formalize this notion of outage, resulting from energy unavailability, later in this section. Due to the form of these equations, it is not possible to derive closed form expressions for the positive solution(s) of $\{\rho_k\}$. However, it is possible to establish a necessary and sufficient condition for the existence and uniqueness of a non-trivial positive solution. Before establishing this result, we show that the function of $\{\rho_k\}$ on the right hand side of \eqref{eq:fp_k} satisfies certain key properties. For notational simplicity, we call this function corresponding to $k^{th}$ tier as $g_k: \nbbR^K \rightarrow \nbbR$, using which the set of fixed point equations given by \eqref{eq:fp_k} can be expressed in vector form as
\begin{align}
\left[\begin{array}{c} \rho_1 \\ \rho_2 \\ \vdots \\ \rho_K \end{array} \right] =  \left[\begin{array}{c} g_1(\rho_1, \rho_2, \ldots, \rho_K) \\ g_2(\rho_1, \rho_2, \ldots, \rho_K) \\ \vdots \\ g_K(\rho_1, \rho_2, \ldots, \rho_K) \end{array} \right] = \Xi(\rho_1, \rho_2, \ldots, \rho_K),
\label{eq:fp_vec}
\end{align}
where we further define function $\Xi:\nbbR^K \rightarrow \nbbR^K$ for simplicity of notation. Our first goal is to study the properties of function $g_k: \nbbR^K \rightarrow \nbbR$, which can be rewritten as
\begin{align}
g_k (\nbx) = 1 - \left(\frac {1 - \sum_{j=1}^K a_j x_j}   {1 - \left( \sum_{j=1}^K a_j x_j \right)^{N}} \right),
\label{eq:func_fp}
\end{align}
where $\nbx \in \nbbR^K$, $N>1$, and $a_k \in \nbbR_+$ for all $k\in \ncalK$. The relevant properties are summarized in the following Lemma.

\begin{lemma}[Properties] \label{lem:properties}
The function $g_k(\nbx): \nbbR^K \rightarrow \nbbR$ defined by \eqref{eq:func_fp} satisfies following properties for all $a_k > 0$, $k \in \ncalK$:
\begin{enumerate}
\item $g_k(\nbx)$ is an element-wise increasing function of $\nbx$.
\item $g_k(\nbx)$ is concave, i.e., it is a concave function of $x_k \in \nbbR$ for all $k \in \ncalK$.
\end{enumerate}
\end{lemma}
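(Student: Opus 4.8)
The plan is to factor $g_k$ through a single one-dimensional function and reduce both claims to elementary properties of that function. Write $L(\nbx) = \sum_{j=1}^{K} a_j x_j$, an affine map with strictly positive coefficients, and $\phi(s) = 1 - \frac{1-s}{1-s^N}$, so that $g_k(\nbx) = \phi(L(\nbx))$. Since $N$ is a positive integer (in our application $N = N_k+1 \ge 2$), factor $1 - s^N = (1-s)\,P(s)$ with $P(s) := 1 + s + \cdots + s^{N-1}$; then $\phi(s) = 1 - 1/P(s)$, which is smooth on $s \ge 0$ because $P(s) \ge 1$ there, so the apparent singularity of \eqref{eq:func_fp} at $s = 1$ is removable. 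As every value $s = L(\nbx)$ of interest is nonnegative, it suffices to study $\phi$ on $[0,\infty)$.

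For the monotonicity claim, $\phi'(s) = P'(s)/P(s)^2 > 0$ for $s \ge 0$ (since $P'(s) = \sum_{i=1}^{N-1} i\,s^{i-1} \ge 1$), hence $\partial g_k/\partial x_j = a_j\,\phi'(L(\nbx)) > 0$ for each $j \in \ncalK$, which is exactly element-wise monotonicity. For the concavity claim, because $L$ is affine it is enough to show $\phi$ is concave on $[0,\infty)$; then $g_k = \phi\circ L$ is concave, and in particular a concave function of each coordinate $x_k$. Writing $\phi''(s) = \big(P''(s)P(s) - 2P'(s)^2\big)/P(s)^3$ and using $P(s) > 0$, concavity of $\phi$ reduces to the polynomial inequality $2P'(s)^2 \ge P''(s)P(s)$ for $s \ge 0$.

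Establishing this inequality is the only real obstacle, and attacking $2P'(s)^2 - P''(s)P(s)$ head-on is messy; the clean route is to exploit the identity $(1-s)P(s) = 1 - s^N$. Differentiating it twice gives $(1-s)P'(s) = P(s) - N s^{N-1}$ and $(1-s)P''(s) = 2P'(s) - N(N-1)s^{N-2}$; substituting these to eliminate $P'$ and $P''$ collapses $2P'(s)^2 - P''(s)P(s)$ to $N s^{N-2}\big[(N-1)P(s) - 2sP'(s)\big]/(1-s)$, and since $(N-1)P(s) - 2sP'(s) = \sum_{i=0}^{N-1}(N-1-2i)s^i$ vanishes at $s=1$ it is itself divisible by $1-s$, leaving
\begin{align}
2P'(s)^2 - P''(s)P(s) = N\,s^{N-2}\sum_{i=0}^{N-2}(i+1)(N-1-i)\,s^{i}.
\end{align}
All coefficients $(i+1)(N-1-i)$ are strictly positive for $0 \le i \le N-2$, so the right-hand side is nonnegative for $s \ge 0$, giving $\phi'' \le 0$. (Alternatively, one may expand $2P'(s)^2 - P''(s)P(s)$ directly and symmetrize over the two summation indices to check that all its coefficients are nonnegative, but that computation is longer.) This proves concavity of $\phi$, hence of $g_k$, completing both parts. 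The differentiations and the coefficient bookkeeping are routine; the only idea needed is to work through $(1-s)P(s) = 1 - s^N$ rather than manipulating $P$, $P'$, $P''$ directly.
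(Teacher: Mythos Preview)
Your proof is correct and follows the same overall strategy as the paper's---reduce to a single real variable via the affine sum, then check the signs of the first and second derivatives---but your execution is cleaner. The paper works directly with $g(x) = 1 - (1-ax)/(1-(ax)^N)$, computes $g'$ and $g''$ as rational functions in $ax$, and must treat the apparent singularity at $ax=1$ separately each time with L'H\^{o}pital's rule. Your rewriting $\phi(s) = 1 - 1/P(s)$ with $P(s) = 1+s+\cdots+s^{N-1}$ removes that singularity from the outset, so monotonicity is immediate from $P'>0$, and concavity reduces to the polynomial inequality $2P'(s)^2 \ge P''(s)P(s)$, for which your identity $2P'(s)^2 - P''(s)P(s) = N s^{N-2}\sum_{i=0}^{N-2}(i+1)(N-1-i)s^i$ gives a self-contained positivity proof. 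The price you pay is assuming $N$ is an integer so that the factorization $1-s^N = (1-s)P(s)$ is available; the paper's lemma is phrased only with ``$N>1$'', but since in the application $N = N_k+1$ is always an integer, this is not a genuine restriction.
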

\begin{proof}
The proof is given in Appendix~\ref{Appendix:properties}.
\end{proof}

Lemma~\ref{lem:properties} can be easily extended to the function $\Xi:\nbbR^K \rightarrow \nbbR^K$ to show that it is also a monotonically increasing and concave function. The conditions for existence and uniqueness of the fixed point for such functions can be characterized by specializing Tarski's theorem~\cite{TarJ1955} for concave functions. The result is stated below. To the best of the knowledge of the authors, it first appeared in~\cite[Theorem 3]{KenJ2001}. Since the proof is given in~\cite{KenJ2001}, it is skipped here.

\begin{theorem}[Fixed point for increasing concave functions] \label{thm:fp_concave}
Suppose $\Xi: \nbbR^n \rightarrow \nbbR^n$ is an increasing and strictly concave function satisfying the following two properties:
\begin{enumerate} 
\item $\Xi(0) \geq 0$, $\Xi(a) > a$ for some $a \in \nbbR^n_+$, 
\item $\Xi(b) < b$ for some $b>a$. 
\end{enumerate}
Then $\Xi$ has a unique positive fixed point.
\end{theorem}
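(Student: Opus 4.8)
The plan is to prove the two halves separately: existence is a routine monotone / Brouwer argument, while uniqueness is the substantive part and is exactly where the \emph{strict} concavity hypothesis is used.

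For existence, I would first upgrade the bracketing point. Hypothesis~(1) gives a point $a\in\nbbR^n_+$ with $\Xi(a)>a$; set $a':=\Xi(a)$. Then $a'_i=\Xi(a)_i>a_i\ge 0$ coordinatewise, so $a'\in\nbbR^n_{++}$; monotonicity gives $\Xi(a')\ge\Xi(a)=a'$; and since $a<b$, monotonicity gives $a'=\Xi(a)\le\Xi(b)<b$. Hence the order interval $[a',b]=\{x:a'\le x\le b\}$ is a nonempty compact convex subset of the open positive orthant, and for any $x$ with $a'\le x\le b$ monotonicity yields $a'\le\Xi(a')\le\Xi(x)\le\Xi(b)<b$, i.e.\ $\Xi$ maps $[a',b]$ into itself. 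A finite-valued concave function on $\nbbR^n$ is continuous, so Brouwer's fixed-point theorem produces $x^\star\in[a',b]$ with $\Xi(x^\star)=x^\star$; this $x^\star$ is positive since $x^\star\ge a'>0$. (Equivalently one may run the monotone iteration $x^{(m+1)}=\Xi(x^{(m)})$ from $x^{(0)}=a'$: the iterates are coordinatewise nondecreasing and bounded above by $b$, hence converge, and the limit is a fixed point by continuity.)

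For uniqueness I would argue by contradiction, and this is the step I expect to be the main obstacle, because two fixed points need not be comparable. Suppose $x^\star\ne y^\star$ are positive fixed points; both lie in $\nbbR^n_{++}$, so coordinate ratios are defined. Not both of $\min_i y^\star_i/x^\star_i$ and $\min_i x^\star_i/y^\star_i$ can be $\ge 1$ (that would force $x^\star\ge y^\star$ and $y^\star\ge x^\star$, hence $x^\star=y^\star$), so after possibly swapping the names of the two points we may assume $\lambda:=\min_i y^\star_i/x^\star_i\in(0,1)$. Pick a coordinate $i_0$ attaining the minimum, so $y^\star\ge\lambda x^\star$ componentwise with equality $y^\star_{i_0}=\lambda x^\star_{i_0}$. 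Now combine the three structural hypotheses. Writing $\lambda x^\star=\lambda x^\star+(1-\lambda)\cdot 0$, strict concavity of each component (with $\lambda\in(0,1)$ and $x^\star\ne 0$) together with $\Xi(0)\ge 0$ gives, coordinatewise,
\[
\Xi(\lambda x^\star)\;>\;\lambda\,\Xi(x^\star)+(1-\lambda)\,\Xi(0)\;\ge\;\lambda\,\Xi(x^\star)\;=\;\lambda x^\star,
\]
the first inequality being strict and the last using $\Xi(x^\star)=x^\star$. On the other hand $\lambda x^\star\le y^\star$, so monotonicity gives $\Xi(\lambda x^\star)\le\Xi(y^\star)=y^\star$. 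Comparing the $i_0$-th coordinate, $y^\star_{i_0}\ge\Xi_{i_0}(\lambda x^\star)>\lambda x^\star_{i_0}=y^\star_{i_0}$, a contradiction; hence $x^\star=y^\star$.

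So the existence half is essentially bookkeeping — the only mildly delicate point is pushing the witness $a$ to a strictly positive $a'$ so that the invariant box sits inside the open orthant and Brouwer (or monotone iteration) applies cleanly. The real content is the tight-coordinate argument for uniqueness: it is precisely ``$\Xi(0)\ge 0$ \emph{plus} strict concavity \emph{plus} monotonicity'' that forces $\Xi(\lambda x^\star)$ strictly above $\lambda x^\star$ in every coordinate, and the coordinate where the ratio $y^\star_i/x^\star_i$ is tight then delivers the contradiction; mere concavity gives only ``$\ge$'' and uniqueness can genuinely fail. Finally, to apply the theorem to $\Xi=(g_1,\dots,g_K)$ from \eqref{eq:func_fp} one must separately exhibit, from the parameters $\mu_k,\nu_k,N_k$, a box on which the $g_k$ are finite together with bracketing points $a,b$ satisfying (1)--(2) — but that verification belongs to the surrounding sections, not to the proof of this theorem.
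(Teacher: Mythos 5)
Your proof is correct. Note that the paper itself does not prove this theorem at all: it states the result and defers to \cite[Theorem 3]{KenJ2001}, so there is no in-paper argument to compare against. What you have written is, in substance, the standard proof from that reference: existence by upgrading the witness $a$ to the strictly positive point $a'=\Xi(a)$, checking that the order interval $[a',b]$ is invariant under $\Xi$ (monotonicity plus $\Xi(a')\ge a'$ and $\Xi(b)<b$), and applying Brouwer or the monotone iteration, continuity being automatic because a finite concave function on $\nbbR^n$ is continuous; uniqueness by the minimal-ratio, tight-coordinate argument, where $\lambda=\min_i y^\star_i/x^\star_i\in(0,1)$, strict concavity along the segment from $0$ to $x^\star$ together with $\Xi(0)\ge 0$ forces $\Xi(\lambda x^\star)>\lambda x^\star$ coordinatewise, and monotonicity at the tight coordinate $i_0$ yields the contradiction. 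The only reading conventions you rely on --- ``increasing'' and ``strictly concave'' taken componentwise, and ``positive fixed point'' meaning strictly positive in every coordinate so that the ratios are defined --- are exactly the conventions under which the paper applies the theorem (it seeks $\rho_k>0$ for all $k$), so there is no gap. Your closing remark is also apt: verifying hypotheses (1)--(2) for $\Xi=(g_1,\dots,g_K)$ is done in the surrounding text (via $\kappa_k>1$ and $g_k<1$), not inside this theorem.
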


Before deriving the main result about the existence and uniqueness of positive solution for the set of fixed point equations~\eqref{eq:fp_k}, for cleaner exposition we state the following intermediate result that establishes equivalence between an energy conservation principle and a key set of conditions.

\begin{lemma}[Equivalence] \label{lem:equivalence}
For $\rho_k > 0, \forall k$, the following sets of conditions are equivalent, i.e., \eqref{eq:sol_existence} $\Leftrightarrow$ \eqref{eq:sol_condition}
\begin{align}
\frac{\mu_k \sum\limits_{j =1}^K  \rho_j \lambda_j \nbbE \left[\ncalX_j^{\frac{2}{\alpha}} \right] P_j^{\frac{2}{\alpha}}}{\rho_k \pc \lambda_u \nbbE \left[\ncalX_k^{\frac{2}{\alpha}} \right] P_k^{\frac{2}{\alpha}}} &> 1, \forall k \in \ncalK \label{eq:sol_existence} \\
\sum_{k=1}^K \lambda_k \mu_k &> \lambda_u \pc, \label{eq:sol_condition}
\end{align}
where \eqref{eq:sol_condition} is simply the energy conservation principle, i.e., the net energy harvested by all the tiers should be greater than the effective energy required by all the users.
\end{lemma}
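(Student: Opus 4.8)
The plan is to prove the equivalence \eqref{eq:sol_existence} $\Leftrightarrow$ \eqref{eq:sol_condition} by exhibiting a single quantity that is on the same side of the relevant inequality in both conditions. The natural candidate is the total energy utilization rate $\sum_k \rho_k \lambda_k \nu_k$, where $\nu_k$ is the energy utilization rate from Corollary~\ref{cor:energyutilizationrate}. First I would introduce the shorthand $c_j \nbydef \lambda_j \nbbE[\ncalX_j^{2/\alpha}] P_j^{2/\alpha}$, so that $\nu_k = \pc \lambda_u \nbbE[\ncalX_k^{2/\alpha}] P_k^{2/\alpha} / \sum_{j} \rho_j c_j$ and hence $\rho_k \lambda_k \nu_k = \pc \lambda_u \rho_k c_k / \sum_j \rho_j c_j$. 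Summing over $k \in \ncalK$ gives the key identity $\sum_{k=1}^K \rho_k \lambda_k \nu_k = \pc \lambda_u$, i.e. the total rate at which energy is drawn by all available BSs equals the rate at which energy is demanded by the covered users — this is the ``energy conservation'' content that makes the whole argument work.

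Next I would rewrite condition \eqref{eq:sol_existence}: dividing numerator and denominator appropriately, \eqref{eq:sol_existence} says exactly $\mu_k > \nu_k$ for every $k$, i.e. the harvesting rate strictly exceeds the utilization rate at every tier. Multiplying the $k$th such inequality by $\rho_k \lambda_k > 0$ (legitimate since we are in the regime $\rho_k > 0$ for all $k$) and summing over $k$ yields $\sum_k \rho_k \lambda_k \mu_k > \sum_k \rho_k \lambda_k \nu_k = \pc \lambda_u$. Finally, since $\rho_k \le 1$ for all $k$ (an availability is a probability), $\sum_k \lambda_k \mu_k \ge \sum_k \rho_k \lambda_k \mu_k > \pc \lambda_u$, which is \eqref{eq:sol_condition}. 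This establishes the forward direction \eqref{eq:sol_existence} $\Rightarrow$ \eqref{eq:sol_condition}.

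For the reverse direction \eqref{eq:sol_condition} $\Rightarrow$ \eqref{eq:sol_existence}, the subtlety is that $\{\rho_k\}$ appearing in \eqref{eq:sol_existence} are not free — they are constrained to satisfy the fixed point equations \eqref{eq:fp_k}, equivalently $\rho_k = g_k(\rho)$. I would argue as follows: from the birth-death stationary distribution, $\rho_k = g_k(\rho) = 1 - (1 - \nu_k/\mu_k)/(1 - (\nu_k/\mu_k)^{N_k+1})$ with the ratio written as $\nu_k/\mu_k$, and one checks directly from this formula that $g_k(\rho) > 0$ forces the argument $\sum_j a_j \rho_j = \mu_k^{-1}\nu_k\cdot(\text{stuff})$ — more precisely, that $\rho_k \in (0,1)$ holds iff $\nu_k/\mu_k \neq 1$, and that $\rho_k < 1$ is automatic while the sign structure of the stationary distribution pins down whether $\nu_k < \mu_k$ or $\nu_k > \mu_k$. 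The cleanest route is: suppose for contradiction that \eqref{eq:sol_existence} fails, so $\nu_k \ge \mu_k$, i.e. $\mu_k \le \nu_k$, for at least one tier; I would like to conclude it fails for \emph{all} tiers (so that the same summation argument runs in reverse and contradicts \eqref{eq:sol_condition}). This coupling across tiers — showing $\mu_k \le \nu_k$ for one $k$ propagates, via the fixed point relations and the identity $\sum_k \rho_k\lambda_k\nu_k = \pc\lambda_u$, to $\mu_k \le \nu_k$ for all $k$ — is the main obstacle, and it is where I expect to lean on the monotonicity/concavity of $\Xi$ from Lemma~\ref{lem:properties} together with Theorem~\ref{thm:fp_concave} to control which fixed point is being discussed. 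An alternative, possibly slicker, reverse argument: contrapositive. If \eqref{eq:sol_existence} fails at some tier, then at the relevant fixed point $\rho_k\lambda_k\mu_k \le \rho_k\lambda_k\nu_k$ there, and since at any positive fixed point $\sum_k \rho_k\lambda_k\nu_k = \pc\lambda_u$ exactly, one gets $\sum_k \rho_k \lambda_k\mu_k \le \pc\lambda_u$ — but this bounds the $\rho$-weighted sum, not $\sum_k \lambda_k\mu_k$, so closing the gap requires knowing the $\rho_k$ are not too small, again returning to the fixed-point structure. I would present the forward direction cleanly as above, and for the reverse direction state precisely which fixed point (the unique positive one guaranteed by Theorem~\ref{thm:fp_concave}) is meant, then run the contradiction using the conservation identity; the delicate bookkeeping of strict versus non-strict inequalities at tiers where $\rho_k$ could a priori vanish is the step to handle most carefully.
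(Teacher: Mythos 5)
Your conservation identity $\sum_k \rho_k\lambda_k\nu_k=\pc\lambda_u$ is correct and is exactly the cancellation underlying the paper's forward argument, but there are two genuine problems. First, \eqref{eq:sol_existence} is \emph{not} equivalent to $\mu_k>\nu_k$: with $\nu_k$ as in Corollary~\ref{cor:energyutilizationrate}, the left side of \eqref{eq:sol_existence} equals $\mu_k/(\rho_k\nu_k)$, so the condition reads $\mu_k>\rho_k\nu_k$, which is strictly weaker than $\mu_k>\nu_k$. As written, your forward direction derives \eqref{eq:sol_condition} from a hypothesis stronger than \eqref{eq:sol_existence}, so it does not prove the stated implication. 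The repair is immediate and lands you on the paper's proof: from $\mu_k>\rho_k\nu_k$ multiply by $\lambda_k$ and sum over $k$, giving $\sum_k\lambda_k\mu_k>\sum_k\rho_k\lambda_k\nu_k=\pc\lambda_u$ (equivalently, clear the denominator in \eqref{eq:sol_existence}, multiply by $\lambda_k$, sum, and cancel the common factor $\sum_j\rho_j\lambda_j\nbbE[\ncalX_j^{2/\alpha}]P_j^{2/\alpha}$); the step ``$\rho_k\le 1$'' is then not needed.

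Second, the reverse direction is not proven: you describe the obstacle and where you would ``lean on'' Lemma~\ref{lem:properties} and Theorem~\ref{thm:fp_concave}, but no argument is completed, and your own contrapositive sketch correctly identifies why it stalls (you only control the $\rho$-weighted sum $\sum_k\rho_k\lambda_k\mu_k$). Your instinct that the pointwise implication can fail is right: for a fixed tuple $(\rho_k,\lambda_k,\mu_k)$ one tier with very small $\mu_k$ can violate \eqref{eq:sol_existence} while \eqref{eq:sol_condition} still holds. The paper does not invoke the fixed-point structure here at all; it multiplies \eqref{eq:sol_condition} by $\sum_j\rho_j\lambda_j\nbbE[\ncalX_j^{2/\alpha}]P_j^{2/\alpha}$, rearranges the result into the form $\sum_k\lambda_k\bigl(\kappa_k-1\bigr)>0$ where $\kappa_k$ is the ratio on the left of \eqref{eq:sol_existence}, and then argues that since the densities $\lambda_k$ are arbitrary system parameters, the displayed inequality can hold in general only if each bracket is positive; i.e., the equivalence is meant parametrically rather than for one fixed parameter tuple. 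Your write-up is missing this step (or any completed substitute for it), so at present you have a fixable but incorrect forward direction and no reverse direction.
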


\begin{proof}
The proof is given in Appendix~\ref{Appendix:equivalence}.
\end{proof}

Using Theorem~\ref{thm:fp_concave} and Lemma~\ref{lem:equivalence}, we now derive the main result of this subsection.

\begin{theorem} \label{thm:energyconservation}
The necessary and sufficient condition for the existence of a positive solution $\rho_k > 0$, $\forall\ k \in \ncalK$ for the system of fixed point equations given by~\eqref{eq:fp_k} is
\begin{align}
\sum_{k=1}^K \lambda_k \mu_k > \lambda_u \pc.
\label{eq:sol_condition_thm}
\end{align}
\end{theorem}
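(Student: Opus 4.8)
The plan is to read both implications off the fixed-point criterion of Theorem~\ref{thm:fp_concave}, applied to the map $\Xi$ of \eqref{eq:fp_vec}: Lemma~\ref{lem:properties} supplies the monotonicity and concavity, while Lemma~\ref{lem:equivalence} is what converts the scalar energy-conservation inequality \eqref{eq:sol_condition_thm} into the coordinatewise inequalities the criterion needs. As a preliminary I would rewrite the coordinate function \eqref{eq:func_fp} of tier $k$, using $1-s^{N_k+1}=(1-s)(1+s+\cdots+s^{N_k})$, as $g_k(\nbx)=1-(1+s_k+s_k^2+\cdots+s_k^{N_k})^{-1}$ with $s_k:=\sum_{j=1}^K a^{(k)}_j x_j$ and $a^{(k)}_j:=\mu_k\lambda_j\nbbE[\ncalX_j^{2/\alpha}]P_j^{2/\alpha}\,/\,(\pc\lambda_u\nbbE[\ncalX_k^{2/\alpha}]P_k^{2/\alpha})>0$. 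This form shows the apparent pole of \eqref{eq:func_fp} at $s_k=1$ is removable, that $g_k$ is smooth and increasing on $\nbbR_+^K$ with $g_k(\nb0)=0$ and $0\le g_k(\nbx)<1$ for all $\nbx\ge\nb0$, and it makes transparent two scalar estimates I will use: for $s_k>0$ one has $g_k(\nbx)<s_k$ (since $s_k-g_k(\nbx)=(1-s_k)s_k^{N_k+1}/(1-s_k^{N_k+1})>0$ by a sign check), and for $0<s_k<1$ one has $g_k(\nbx)\ge s_k-s_k^{N_k+1}$.

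For \emph{necessity}, suppose \eqref{eq:fp_k} has a solution with all $\rho_k>0$. Then $s_k:=\sum_j a^{(k)}_j\rho_j>0$, and the fixed-point identity $\rho_k=g_k(\rho_1,\ldots,\rho_K)$ together with the estimate $g_k<s_k$ gives $\rho_k<\sum_j a^{(k)}_j\rho_j$ for every $k$; substituting the definition of $a^{(k)}_j$, this is exactly \eqref{eq:sol_existence}, and Lemma~\ref{lem:equivalence} then yields \eqref{eq:sol_condition}, i.e.\ \eqref{eq:sol_condition_thm}.

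For \emph{sufficiency}, assume \eqref{eq:sol_condition_thm}. By Lemma~\ref{lem:equivalence} this produces positive numbers $r_1,\ldots,r_K$ for which \eqref{eq:sol_existence} holds, i.e.\ $\sigma_k:=\sum_j a^{(k)}_j r_j>r_k$ for all $k$. I would then verify the hypotheses of Theorem~\ref{thm:fp_concave} on an order interval $[\nb0,b]$: (i) $\Xi(\nb0)=\nb0\ge\nb0$ since $g_k(\nb0)=0$; (ii) put $a:=\varepsilon\,(r_1,\ldots,r_K)$ with $\varepsilon>0$ small enough that $\varepsilon\sigma_k<1$ and $\varepsilon^{N_k}\sigma_k^{N_k+1}<\sigma_k-r_k$ for every $k$ (finitely many constraints, feasible because each $\sigma_k-r_k>0$); then the estimate $g_k(\nbx)\ge s_k-s_k^{N_k+1}$ at $s_k=\varepsilon\sigma_k$ gives $g_k(a)\ge\varepsilon\sigma_k-(\varepsilon\sigma_k)^{N_k+1}>\varepsilon r_k=a_k$, so $\Xi(a)>a$; (iii) take $b:=(B,\ldots,B)$ with $B>\max\{1,a_1,\ldots,a_K\}$, so that $g_k(b)<1<B=b_k$ gives $\Xi(b)<b$ and $b>a$. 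Since $\Xi$ is increasing and concave by Lemma~\ref{lem:properties}, with the strict concavity of each $g_k$ holding throughout the interior $\nbx>\nb0$ where $[a,b]$ sits (the $x_k$-second derivative of $g_k$ being negative there), Theorem~\ref{thm:fp_concave} applies and returns a unique positive fixed point of $\Xi$, which is precisely a solution of \eqref{eq:fp_k} with all $\rho_k>0$.

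The one genuinely load-bearing step is (ii). Condition \eqref{eq:sol_existence} only says the linear map $\nbx\mapsto(\sum_j a^{(k)}_j x_j)_k$ pushes $(r_k)$ strictly above itself; because each $g_k$ sits strictly below this linearization ($g_k(\nbx)<s_k$), one cannot take $a=(r_k)$ directly and must rescale toward the origin so the linear surplus $\varepsilon(\sigma_k-r_k)$ dominates the $O(\varepsilon^{N_k+1})$ curvature correction --- which is exactly why the hypothesis $N_k+1>1$ in \eqref{eq:func_fp} is needed. Conceptually this is the assertion that the Perron eigenvalue of the (here rank-one) matrix $(a^{(k)}_j)_{k,j}$, namely $\sum_k\lambda_k\mu_k/(\lambda_u\pc)$, exceeds $1$ iff $\sum_k\lambda_k\mu_k>\lambda_u\pc$; Lemma~\ref{lem:equivalence} packages that spectral fact and the rescaling transfers it to the nonlinear $\Xi$. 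A minor bookkeeping caveat: $g_k$ is only weakly concave at $\nbx=\nb0$ when $N_k\ge2$, so Theorem~\ref{thm:fp_concave} must be invoked on $[a,b]$ with $a$ bounded away from $\nb0$, where strict concavity does hold.
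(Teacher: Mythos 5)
Your proof is correct, and while it rests on the same pillars as the paper's (Theorem~\ref{thm:fp_concave} plus Lemmas~\ref{lem:properties} and~\ref{lem:equivalence}), it executes both directions differently and, in one place, more completely. For necessity, the paper only exhibits a $K=1$ counterexample, whereas your observation that $g_k(\nbx)=s_k(1-s_k^{N_k})/(1-s_k^{N_k+1})<s_k$ turns any positive fixed point directly into the strict inequalities \eqref{eq:sol_existence}, and hence into \eqref{eq:sol_condition_thm} via the summation direction of Lemma~\ref{lem:equivalence}; this gives necessity for arbitrary $K$, which is a genuine strengthening of the paper's argument. For sufficiency, the paper argues coordinatewise: it freezes $\kappa_k$ (which depends on the other availabilities), uses $g_k(0)=0$, $g_k'(0)=\kappa_k$ and $g_k<1$ to produce the points $a<b=1$ required by Theorem~\ref{thm:fp_concave}, and then invokes Lemma~\ref{lem:equivalence} to translate $\kappa_k>1$ into \eqref{eq:sol_condition_thm}; your version vectorizes this by picking $a=\varepsilon(r_1,\ldots,r_K)$ along a direction satisfying \eqref{eq:sol_existence} and bounding $g_k(\nbx)\ge s_k-s_k^{N_k+1}$ explicitly, which replaces the derivative-at-zero limit with a quantitative estimate and removes the mild circularity of treating $\kappa_k$ as a constant. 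Two small remarks: your appeal to Lemma~\ref{lem:equivalence} to ``produce'' the vector $r$ is the existential reading of that lemma (the coordinatewise inequalities cannot hold for \emph{every} positive $\rho$), and it can be made concrete by taking $r_k\propto\mu_k/\bigl(\nbbE[\ncalX_k^{2/\alpha}]P_k^{2/\alpha}\bigr)$, for which \eqref{eq:sol_existence} reduces exactly to \eqref{eq:sol_condition_thm} --- precisely your Perron-eigenvector remark; and your caveat about strict concavity failing at the origin for $N_k\ge 2$, handled by invoking the fixed-point theorem on $[a,b]$ with $a>\nb0$, is a legitimate point that the paper glosses over.
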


\begin{proof}
For sufficiency, it is enough to show that the given condition is sufficient for the function $\Xi: \nbbR^K \rightarrow \nbbR^K$ defined by~\eqref{eq:fp_vec} to satisfy both the properties listed in Theorem~\ref{thm:fp_concave}. Further, it is enough to show this for each element function $g_k: \nbbR^K \rightarrow \nbbR$ of $\Xi$. For $\rho_k \neq 0$, the function $g_k$, as a function of $\rho_k$ can be expressed as
\begin{align}
g_k(\rho_k) = 1 - \left( \frac{1-\kappa_k \rho_k}{ 1 - (\kappa_k \rho_k)^{N_k+1}}\right),
\label{eq:fp_k1}
\end{align}
where
\begin{align}
\kappa_k = \frac{\mu_k \sum_{j =1}^K  \rho_j \lambda_j \nbbE \left[\ncalX_j^{\frac{2}{\alpha}} \right] P_j^{\frac{2}{\alpha}}}{\rho_k \pc \lambda_u \nbbE \left[\ncalX_k^{\frac{2}{\alpha}} \right] P_k^{\frac{2}{\alpha}}}.
\end{align}
Note that the function $g_k(\rho_k) < 1$ for finite $N_k$. Now setting $b$, as defined in Theorem~\ref{thm:fp_concave}, equal to $1$, it is enough to find conditions under which $\exists\ a < b$ such that $g_k(a) > a$. Since $g_k(\rho_k) = 0$ for $\rho_k \rightarrow 0$, for the existence of $a$ such that $g_k(a) > a$ it is enough to show that $g'(\rho_k) > 1$ for $\rho_k \rightarrow 0$. 
Furthermore, it is easy to show that $g'(\rho_k) = \kappa_k$ for $\rho_k \rightarrow 0$, which leads to the condition $\kappa_k > 1$ for the existence of $a$ as defined above. This leads to the following set of inequalities for $1\leq k \leq K$:
\begin{align}
\frac{\mu_k \sum_{j =1}^K  \rho_j \lambda_j \nbbE \left[\ncalX_j^{\frac{2}{\alpha}} \right] P_j^{\frac{2}{\alpha}}}{\rho_k \pc \lambda_u \nbbE \left[\ncalX_k^{\frac{2}{\alpha}} \right] P_k^{\frac{2}{\alpha}}} > 1.
\end{align}
From Lemma~\ref{lem:equivalence}, this set of conditions is the same as \eqref{eq:sol_condition_thm} and hence proves that \eqref{eq:sol_condition_thm} is a sufficient condition for the existence and uniqueness of the positive solution for $\{\rho_k\}$. 

To show that the given condition is also necessary, we construct a simple counter example. Let $K=1$ and drop all the subscripts denoting the indices of tiers for notational simplicity. The fixed point equation for this simple setup is
\begin{align}
\rho = 1 -\left(\frac{1 - \frac{\mu \rho \lambda}{\pc \lambda_u}}    {1 - \left( \frac{\mu \rho \lambda}{\pc \lambda_u} \right)^N} \right) = g(\rho),
\end{align}
It is easy to show that $g(\rho)$ does not have a positive fixed point when $\mu \lambda < \pc \lambda_u$, which proves that the given condition~\eqref{eq:sol_condition_thm} is also necessary.
\end{proof}
The existence and uniqueness of the positive solution for the BS availabilities $\{\rho_k\}$ will play a crucial role in establishing the availability region later in this section. The unique positive solution for $\{\rho_k\}$ can be computed easily using fixed-point iteration. Before concluding this section, it is important to formalize some key ideas.

\begin{remark}[Energy outage]
From Theorem~\ref{thm:energyconservation}, it is clear that the total energy harvested by the HetNet must be greater than the total energy demand to guarantee a positive solution for the availabilities $\{\rho_k\}$. However, if this condition is not met, the system may drop a certain fraction of users to ensure that the resulting density of users $\lambda'_u$ is such that $\sum_{k=1}^K \lambda_k \mu_k > \lambda_u' \pc$. The rest of the users are said to be in outage due to energy unavailability, or in short ``energy outage''. The probability of a user being in energy outage is
\begin{align}
\ncalO_e = 1 - \frac{\lambda'_u}{\lambda_u} \geq 1 - \frac{\sum_{k=1}^K \lambda_k \mu_k}{\lambda_u \pc},
\end{align}
where the lower bound is strictly positive if $\sum_{k=1}^K \lambda_k \mu_k < \lambda_u \pc$, i.e., when condition~\eqref{eq:sol_condition_thm} is not met.  However, if the condition~\eqref{eq:sol_condition_thm} is met, it is in principle possible to make $\ncalO_e = 0$. The exact characterization of energy outage will depend upon the protocol design and is out of the scope of this paper. In the rest of the paper, we will assume that the condition~\eqref{eq:sol_condition_thm} is met and hence $\ncalO_e = 0$.
\end{remark}

\begin{remark}[Effect of battery capacity on availability]
Note that the function $g_k$ is an increasing function of $N_k$ from which it directly follows that the availability of a particular class of BSs increases with the increase in the battery capacity.
\end{remark}
\begin{remark}[Effect of availabilities of other tiers on $\rho_k$] \label{rem:effectothers}
From Lemma~\ref{lem:properties}, it follows that $g_k$ is an increasing function of not only $\rho_k$ but also of $\rho_j$, $j \neq k$. This implies that the availability of $k^{th}$ tier increases if the availability of one or more of the other tiers is increased. This is consistent with the intuition that if the availability of any tier is increased, the effective load on other tiers decreases hence increasing their availabilities.
\end{remark}

\begin{definition}[Over-provisioning factor]
As mentioned above, we will henceforth assume that the system is over provisioned in terms of energy harvesting, i.e., $\sum_{k=1}^K \lambda_k \mu_k > \lambda_u \pc$. For cleaner exposition, it is useful to define an over-provisioning factor $\gamma$ as the ratio of  total energy harvested in the network and the effective energy demand, i.e.,
\begin{align}
\gamma = \frac{\sum_{k=1}^K \lambda_k \mu_k}{\lambda_u \pc} > 1.
\end{align}
\end{definition}

So far we focused on a particular strategy, where a BS is said to be available if it is not in the $0$ energy state, i.e., it has at least one unit of energy. In the next subsection, we develop tools to study availabilities for any general uncoordinated strategy using stopping/hitting time analysis. Our analysis will concretely demonstrate that the simple strategy discussed above maximizes the BS availabilities over the space of general uncoordinated strategies. Extending these results further, we will characterize the availability region that is achievable by the set of general uncoordinated strategies. 

\begin{figure}
\centering
\includegraphics[width=\columnwidth]{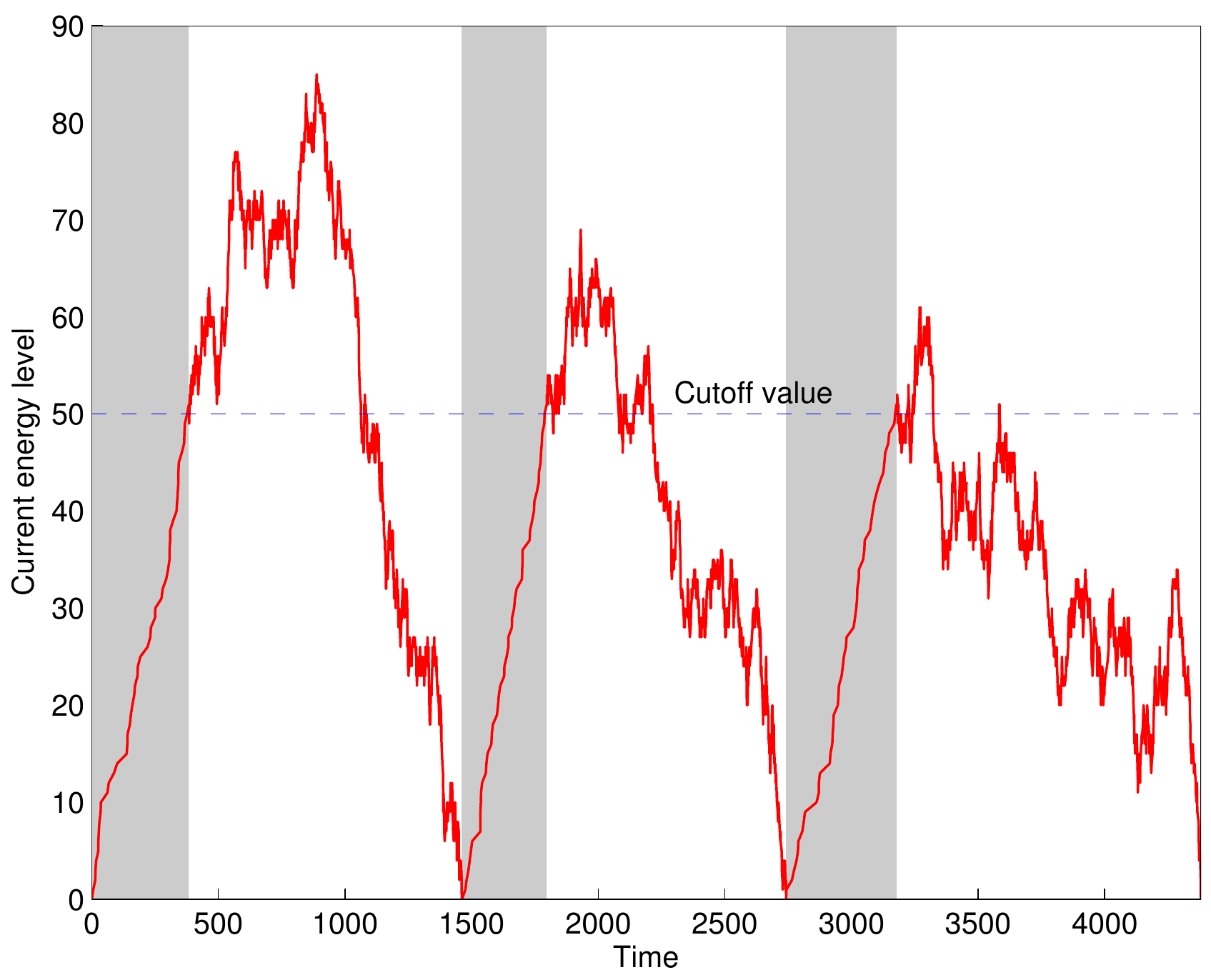}
\caption{Illustration of how the energy level changes over time. The time for which BS is in OFF state is shaded. The unit of time is irrelevant.}
\label{fig:En_vs_Time}
\end{figure}

\subsection{Availabilities for any General Uncoordinated Strategy}
We first focus on a general set of strategies $\{\ncalS_k(N_{k\min}, N_{kc})\}$ in which a BS toggles its state based solely on its current energy level, i.e., a $k^{th}$ tier BS toggles to OFF state when its energy level reaches some level $N_{k\min}$ and toggles back to ON state when the energy level reaches some predefined cutoff value $N_{kc} > N_{k\min}$ as shown in Fig.~\ref{fig:En_vs_Time}. Although not required for this analysis, it should be noted that the cutoff value $N_{kc}$ can be changed by the network if necessary on an even larger time scale than the time scale over which the BSs are turned ON/OFF. Now note that for the proposed model, the strategies $\{\ncalS_k(N_{k\min}, N_{kc})\}$ with energy storage capacity $N_k$ and $\{\ncalS_k(0, N_{kc}-N_{k\min})\}$ with energy storage capacity $N_k - N_{k\min}$, are equivalent because when the BS is turned OFF at a non-zero energy level $N_{k\min}$ in the first set of strategies, it effectively reduces the energy storage capacity to $N_k - N_{k\min}$. Therefore, without any loss of generality we fix $N_{k\min}=0$ (for all tiers) and denote this strategy by $\ncalS_k(N_{kc})$ for notational simplicity. For this strategy, we denote the time for which a $k^{th}$ tier BS is in the ON state after it toggles from the OFF state by $J_{k_1}(N_{kc})$ and the time for which it remains in the OFF state after toggling from the ON state by $J_{k_2}(N_{kc})$. The cutoff value in the arguments will be dropped for notational simplicity wherever appropriate. The cycles of ON and OFF times go on as shown in Fig.~\ref{fig:En_vs_Time}. It is worth highlighting that both $J_{k_1}$ and $J_{k_2}$ are in general random variables due to the randomness involved in both the energy availability and its utilization, e.g., $J_{k_1}$ can be formally expressed as
$J_{k_1}(N_{kc}) = \inf \{ t: \ncalE_k(t) = 0 | \ncalE_k(0)= N_{kc} \}$,
where $\ncalE_k(t)$ denotes the energy level of a $k^{th}$ tier BS at time $t$. For this setup, the availabilities depend only on the means of $J_{k_1}$ and $J_{k_2}$ as shown in the following Lemma.

\begin{lemma}[Availability]
The availability of a $k^{th}$ tier BS for any operational strategy can be expressed as
\begin{align}
\rho_k = \frac{\E[J_{k_1}]}{\E[J_{k_1}] + \E[J_{k_2}]} = \frac{1}{1+ \frac{\E[J_{k_2}]}{\E[J_{k_1}]}},
\end{align}
where $\E[J_{k_1}]$ is the mean time a BS spends in the ON state and $\E[J_{k_2}]$ is the mean time it spends in the OFF state.
\end{lemma}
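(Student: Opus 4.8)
The plan is to recognize that the sequence of ON and OFF durations forms an alternating renewal process, and that the availability $\rho_k$ — defined as the long-run fraction of time the BS is in the ON state — is exactly the quantity computed by the renewal-reward theorem for such a process. So the proof reduces to (i) setting up the regenerative structure, (ii) invoking the renewal-reward / ergodic theorem, and (iii) checking the mild integrability hypotheses.

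First I would argue that the energy-level process $\ncalE_k(t)$, together with the ON/OFF flag, is a regenerative process: each time the BS toggles from OFF back to ON it restarts from the deterministic energy level $N_{kc}$, and — because the energy arrival process is Poisson (memoryless) and the utilization process depends on the state only through the current configuration, which is itself statistically reset at such an instant — the future evolution after such an epoch is independent of the past and identically distributed across cycles. Hence the epochs at which a fresh ON period begins are renewal epochs, and the $n$-th cycle length is $J_{k_1}^{(n)} + J_{k_2}^{(n)}$, an i.i.d.\ copy of $J_{k_1} + J_{k_2}$, with the ON portion being $J_{k_1}^{(n)}$. Within the $n$-th cycle the BS is ON for time $J_{k_1}^{(n)}$; this is the ``reward'' earned per cycle.

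Next I would apply the renewal-reward theorem (equivalently, the strong law of large numbers for the numerator and denominator of the time-average). With $S_n = \sum_{m=1}^{n}\left(J_{k_1}^{(m)} + J_{k_2}^{(m)}\right)$ and $T_n = \sum_{m=1}^{n} J_{k_1}^{(m)}$, the fraction of time spent ON in $[0,S_n]$ is $T_n/S_n$, and dividing numerator and denominator by $n$ and letting $n \to \infty$ gives, almost surely, $T_n/S_n \to \E[J_{k_1}] / \left(\E[J_{k_1}] + \E[J_{k_2}]\right)$. A standard sandwiching argument between consecutive renewal epochs upgrades this from the subsequence $\{S_n\}$ to the continuous limit $t \to \infty$, yielding
\begin{align}
\rho_k = \lim_{t \to \infty} \frac{1}{t}\int_0^t \mathbbm{1}\{\text{BS ON at } s\}\, ds = \frac{\E[J_{k_1}]}{\E[J_{k_1}] + \E[J_{k_2}]},
\end{align}
and the second displayed form in the Lemma is then just algebra. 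For this to be legitimate one needs $0 < \E[J_{k_1}] < \infty$ and $\E[J_{k_2}] < \infty$; $\E[J_{k_2}] < \infty$ is immediate because the OFF period is a pure birth process that deterministically climbs from $0$ to $N_{kc}$ with exponential holding times of rate $\mu_k$, so $\E[J_{k_2}] = N_{kc}/\mu_k$; and $\E[J_{k_1}]$ is finite and strictly positive provided the over-provisioning condition $\sum_k \lambda_k \mu_k > \lambda_u \pc$ holds (so drift is toward recharging on balance but the ON excursion from $N_{kc}$ down to $0$ still has finite expected length by standard hitting-time bounds for the birth-death chain), which is exactly the assumption we have standing from Theorem~\ref{thm:energyconservation}.

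The main obstacle is the justification of the regenerative structure — specifically, arguing cleanly that a fresh ON period beginning at energy level $N_{kc}$ really does constitute a genuine renewal epoch, i.e.\ that nothing about the configuration carried over from previous cycles (the user process being served, partial arrivals) breaks the independence. Here I would lean on the two-time-scale assumption: over the long time scale on which ON/OFF toggling happens, the short-time-scale cell-association and scheduling decisions have been averaged out, so the utilization is characterized entirely by the rate $\nu_k$ from Corollary~\ref{cor:energyutilizationrate}, and the energy process is genuinely a time-homogeneous Markov birth-death chain whose law depends only on the current energy state. Given that, the Markov property at the deterministic level $N_{kc}$ supplies the regeneration for free, and the rest is the textbook alternating-renewal computation.
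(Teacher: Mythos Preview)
Your proposal is correct and follows essentially the same approach as the paper: both write the availability as the limiting fraction of time spent ON, decompose into i.i.d.\ ON/OFF cycles, divide numerator and denominator by the cycle count $n$, and invoke the law of large numbers. Your version is simply more careful about justifying the regenerative structure, verifying the integrability conditions, and upgrading the limit from the renewal epochs to continuous time, whereas the paper compresses all of this into a one-line appeal to the LLN.
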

\begin{proof}
For a particular realization, let $\{J_{k_1}^{(i)}\}$ and $\{J_{k_2}^{(i)}\}$ be the sequences of ON and OFF times, respectively, with $i$ being the index of the ON-OFF cycle. The availability can now be expressed as the fraction of time a BS spends in the ON state, which leads to
\begin{align}
\rho_k = \lim_{n \rightarrow \infty} \frac{\sum_{i=1}^n J_{k_1}^{(i)}}{\sum_{i=1}^n J_{k_1}^{(i)} + \sum_{i=1}^n J_{k_2}^{(i)}}.
\end{align}
The proof follows by dividing both the numerator and the denominator by $n$ and invoking the law of large numbers.
\end{proof}
To set up a fixed point equation similar to \eqref{eq:fp_k} for the strategy $\ncalS_k(N_{kc})$, we need closed form expressions for the mean ON time $\E[J_{k_1}]$ and the mean OFF time $\E[J_{k_2}]$. Note that the OFF time for $\ncalS_k(N_{kc})$ is simply the time required to harvest $N_{kc}$ units of energy, which is the sum of $N_{kc}$ exponentially distributed random variables, each with mean $1/\mu_k$. Therefore, 
\begin{align}
\E[J_{k_2}] = \frac{N_{kc}}{\mu_k} \Rightarrow \rho_k = \frac{1}{1+ \frac{N_{kc}}{\mu_k \E[J_{k_1}]}}.
\label{eq:rho_k_intON}
\end{align} 
To derive $\E[J_{k_1}]$, we first define the generator matrix for the birth-death process corresponding to a $k^{th}$ tier BS as $A_k =$
\begin{align}
\left[  \begin{array}{cccccc}
-\mu_k & \mu_k & 0 & \cdots & 0&0\\
\nu_k & -\mu_k - \nu_k & \mu_k & \cdots & 0&0\\
0 & \nu_k & -\mu_k - \nu_k & \cdots & 0&0 \\
\vdots & \vdots & & \ddots & &\\
0 & 0 & 0 & \cdots & \nu_k & -\nu_k
\end{array}\right],
\label{eq:A_k}
\end{align}
where the states are ordered in the ascending order of the energy levels, i.e., the first column corresponds to the energy level $0$. To complete the derivation, we need the following technical result. Please refer to Proposition 5.7.2 of~\cite{ResB2005} for a more general version of this result and its proof.
\begin{lemma}[Mean hitting time] \label{lem:MeanHitTime}
If the embedded discrete Markov chain of the CTMC is irreducible then the mean time to hit energy level $0$ (state $1$) starting from energy level $i$ (state $i+1$) is
\begin{align}
\nbbE[J_{k_1}(i)] = \left(\left(-B_k \right)^{-1} \mathbbm{1} \right) (i),
\label{eq:meanONgen}
\end{align}
where $\mathbbm{1}$ is a column vector of all $1$s, and $B_k$ is a $(N_k-1) \times (N_k-1)$ sub-matrix of $A_k$ obtained by deleting first row and column of $A_k$.
\end{lemma}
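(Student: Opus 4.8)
The plan is to recognize this as the standard expected--absorption--time computation for a finite continuous--time Markov chain, specializing the general statement (Proposition~5.7.2 of~\cite{ResB2005}) to the birth--death chain of Fig.~\ref{fig:BirthDeath_fig}. First I would make energy level $0$ absorbing: since $J_{k_1}(i) = \inf\{t:\ncalE_k(t)=0\mid \ncalE_k(0)=i\}$ depends only on the trajectory up to the first visit to $0$, replacing the first row of $A_k$ (the row of energy level $0$, i.e.\ ``state $1$'') by the zero row changes none of the means $\E[J_{k_1}(i)]$. After this modification the energy levels $1,2,\dots,N_k$ form a single transient communicating class, and the generator restricted to them is exactly $B_k$, the matrix obtained from $A_k$ by deleting the row and column of energy level $0$; the only probability mass ``leaking'' out of $B_k$ is the downward rate $\nu_k$ from energy level $1$ into the absorbing state.

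Next comes first--step analysis (equivalently, the backward equation, or Dynkin's formula applied to a function that is constant in time). Writing $h_i := \E[J_{k_1}(i)]$ for the transient states and $h_0:=0$, conditioning on the exponential holding time in the current state and on the first jump gives, for every transient $i$,
\[
-(A_k)_{ii}\,h_i \;=\; 1 + \sum_{j\neq i}(A_k)_{ij}\,h_j ,
\]
that is, $\sum_j (A_k)_{ij} h_j = -1$. Because $h_0 = 0$ the $j=0$ column drops out, so this is precisely $B_k\mathbf{h} = -\mathbbm{1}$, i.e.\ $(-B_k)\mathbf{h} = \mathbbm{1}$ with $\mathbf{h}=(h_1,\dots,h_{N_k})^{\top}$.

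Finally I would show $-B_k$ is invertible, so that $\mathbf{h} = (-B_k)^{-1}\mathbbm{1}$ is the unique solution and the lemma follows componentwise. Since each row of $A_k$ sums to $0$ and deleting the level-$0$ column only removes a nonnegative off--diagonal entry, $-B_k$ has positive diagonal, nonpositive off--diagonal entries, nonnegative row sums, and a \emph{strictly} positive row sum in the row of energy level $1$ (the $\nu_k>0$ term). Irreducibility of the embedded chain guarantees all the intermediate birth/death rates $\mu_k,\nu_k$ present in $B_k$ are positive, so every transient state has a directed path within $B_k$ to energy level $1$; hence $-B_k$ is an irreducibly diagonally dominant, and therefore nonsingular, M--matrix. (Equivalently, irreducibility forces absorption at $0$ in a.s.\ finite time with finite mean, so the linear system has a unique finite nonnegative solution, namely $(-B_k)^{-1}\mathbbm{1}$.)

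The first--step equation and the M--matrix bookkeeping are routine; the step I would be most careful about is this last one — translating ``embedded chain irreducible'' into ``$-B_k$ nonsingular / mean hitting time finite'' — together with checking that the reflecting boundary at the top state $N_k$ does not break irreducibility of the transient block.
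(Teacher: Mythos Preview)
Your proof is correct and self-contained, but note that the paper does \emph{not} actually prove this lemma: it simply cites Proposition~5.7.2 of~\cite{ResB2005} and moves on. What you have written is precisely the standard proof of that cited proposition --- make state~$0$ absorbing, derive the backward/first-step system $B_k\mathbf{h}=-\mathbbm{1}$, and then argue $-B_k$ is a nonsingular (irreducibly diagonally dominant) M-matrix --- so there is no methodological difference to compare; you have just filled in what the paper outsources to the reference.

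One minor bookkeeping point: the paper states $B_k$ is $(N_k-1)\times(N_k-1)$, but since the energy levels are $0,1,\dots,N_k$ (i.e.\ $N_k+1$ states), deleting the row and column of level~$0$ actually yields an $N_k\times N_k$ block, as your vector $\mathbf{h}=(h_1,\dots,h_{N_k})^{\top}$ correctly reflects. This is a typo in the paper, not an error in your argument.
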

For $A_k$ given by \eqref{eq:A_k}, we can derive a closed form expression for each element of $\left(-B_k \right)^{-1}$ 
after some algebraic manipulations. The $(i,j)^{th}$ element can be expressed as
\begin{align}
\left(-B_k \right)^{-1} (i,j) = \frac{1}{\nu_k^j} \sum_{n=1}^{\min(i,j)} \mu_k^{j-n} \nu_k^{n-1}.
\label{invB_k_i_j}
\end{align}
Now substituting \eqref{invB_k_i_j} back in \eqref{eq:meanONgen} gives us the mean ON time for any strategy $\ncalS_k(N_{kc})$, which when substituted in \eqref{eq:rho_k_intON} gives a fixed point equation in $\{\rho_k\}$ similar to \eqref{eq:fp_k}, as illustrated below for the two policies of interest.

\subsubsection{Policy 1 ($\ncalS_k(1)$)} In this policy, each BS serves users until it depletes all its energy after which it toggles to OFF state. It toggles back to ON state after it has harvested one unit of energy. Using \eqref{eq:meanONgen} and \eqref{invB_k_i_j}, the mean ON time $\nbbE[J_{k1}]$ for this policy can be expressed as
\begin{align}
\nbbE[J_{k_1}] = \frac{1}{\nu_k} \frac{1-\left(\frac{\mu_k}{\nu_k}\right)^{N_k}}{1-\left(\frac{\mu_k}{\nu_k}\right)},
\end{align}
which when substituted into \eqref{eq:rho_k_intON} leads to
\begin{align}
\rho_k = 1 - \frac{1-\frac{\mu_k}{\nu_k}}{ 1-\left(\frac{\mu_k}{\nu_k}\right)^{N_k+1}   },
\end{align}
which is the same fixed point equation as \eqref{eq:fp_k}. This establishes an equivalence between this policy and the one studied in the previous subsection.  In particular, this policy is an achievable strategy to achieve the same availabilities as the ones possible with the strategy studied in the previous subsection.

\subsubsection{Policy 2 ($\ncalS_k(N_k)$)} As in the above policy, each BS serves users until it depletes all its energy after which it toggles to OFF state. Under this policy, the BS waits in the OFF state until it harvests $N_k$ units of energy, i.e., its energy storage module is completely charged. Using \eqref{eq:meanONgen} and \eqref{invB_k_i_j}, $\nbbE[J_{k_1}]$ can be expressed as
\begin{align}
\nbbE[J_{k_1}] = \frac{1}{\mu_k - \nu_k} \frac{\mu_k}{\nu_k} \frac{1-\left(\frac{\mu_k}{\nu_k}\right)^{N_k}}{1-\left(\frac{\mu_k}{\nu_k}\right)} - \frac{N_k}{\mu_k - \nu_k},
\end{align}
which can be substituted in \eqref{eq:rho_k_intON} to derive the fixed point equation for this policy.

While policy 1 will be of fundamental importance in establishing the availability region, we will also consider policy 2 at several places to highlight key points. We now prove the following theorem, which establishes a fundamental upper limit on the availabilities of various types of BSs that cannot be surpassed by any uncoordinated strategy. Please note that although we have discussed only ``energy-based'' uncoordinated strategies so far, the general set of uncoordinated strategies also additionally includes timer-based, and the combination of energy and timer-based strategies. This is taken into account in the proof of the following theorem.

\begin{theorem} \label{thm:S1_superior}
For a given $K$ tier network, the availabilities of all the classes of BSs are jointly maximized over the space of general uncoordinated strategies if each tier follows strategy $\ncalS_{k}(1)$. The availabilities are strictly lower if any one or more tiers follow $\ncalS_{k}(i)$, $i>1$, with a non-zero probability.
\end{theorem}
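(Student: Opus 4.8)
The plan is to reduce the $K$-tier claim to a single-BS claim with the utilization rate $\nu_k$ held fixed, settle it there by a pathwise coupling, and then lift the conclusion to the coupled fixed-point system~\eqref{eq:fp_vec} by monotonicity. The key reduction is an energy-balance identity. Fix a tier $k$ and regard $\nu_k$ as a constant. Under any uncoordinated strategy the energy level $\ncalE_k(t)\in\{0,\dots,N_k\}$ is driven by a rate-$\mu_k$ Poisson harvesting clock (a tick increments the level, or is wasted if the level is already $N_k$) and a rate-$\nu_k$ Poisson consumption clock (a tick decrements the level iff the BS is available, i.e.\ ON with $\ncalE_k\ge1$; by memorylessness this is equivalent to a fresh exponential clock that runs only while the BS is available). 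Letting $\pi_k(\mathrm{full})$ denote the long-run fraction of time the battery is full, balancing the rate at which energy is stored against the rate at which it is consumed (both read off by PASTA) gives $\mu_k\bigl(1-\pi_k(\mathrm{full})\bigr)=\nu_k\rho_k$, i.e.
\[
\rho_k=\frac{\mu_k}{\nu_k}\bigl(1-\pi_k(\mathrm{full})\bigr).
\]
Hence, for fixed $\nu_k$, maximizing $\rho_k$ over all uncoordinated strategies is exactly minimizing $\pi_k(\mathrm{full})$.

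Next I would show that $\ncalS_k(1)$ minimizes $\pi_k(\mathrm{full})$ by a coupling. As far as the energy trajectory is concerned $\ncalS_k(1)$ is the policy ``serve whenever $\ncalE_k\ge1$'' (the two differ only by the ON/OFF label at level $0$, which affects neither the trajectory nor availability). Run this policy and an arbitrary uncoordinated strategy $\ncalS$ on the same harvesting and consumption clocks from the same initial state, writing $\ncalE_k^{(1)}$, $\ncalE_k^{\ncalS}$ for the two energy processes. A short induction over the clock ticks gives $\ncalE_k^{(1)}(t)\le\ncalE_k^{\ncalS}(t)$ for all $t$: at a harvesting tick the inequality is preserved (both increment, or $\ncalE_k^{\ncalS}$ saturates at $N_k$ first); at a consumption tick $\ncalE_k^{(1)}$ drops by $\mathbbm 1\{\ncalE_k^{(1)}\ge1\}$ while $\ncalE_k^{\ncalS}$ drops by at most $1$, so the order is again preserved. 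Thus $\mathbbm 1\{\ncalE_k^{(1)}(t)=N_k\}\le\mathbbm 1\{\ncalE_k^{\ncalS}(t)=N_k\}$, so $\pi_k^{(1)}(\mathrm{full})\le\pi_k^{\ncalS}(\mathrm{full})$ and $\rho_k^{(1)}\ge\rho_k^{\ncalS}$ for every fixed $\nu_k$. The induction never used the rule by which $\ncalS$ toggles, so this covers energy-based, timer-based and mixed strategies uniformly.

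To transfer this I would introduce $\Xi^{\star}$, the update map $\Xi$ of~\eqref{eq:fp_vec} (every tier using $\ncalS_k(1)$, with $\nu_k$ the decreasing function of $(\rho_1,\dots,\rho_K)$ in~\eqref{eq:energyutilizationrate}), and $\Xi^{\ncalS}$, the analogous map for an arbitrary profile $\ncalS$; by the previous step $\Xi^{\star}(\nbx)\ge\Xi^{\ncalS}(\nbx)$ componentwise for all $\nbx$. By Lemma~\ref{lem:properties} $\Xi^{\star}$ is increasing, strictly so in each coordinate since all coefficients are positive, and by Theorem~\ref{thm:energyconservation} it has a unique positive fixed point $\rho^{\star}$. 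If $\rho^{\ncalS}$ is any availability vector induced by $\ncalS$ (a fixed point of $\Xi^{\ncalS}$) then $\Xi^{\star}(\rho^{\ncalS})\ge\Xi^{\ncalS}(\rho^{\ncalS})=\rho^{\ncalS}$, and iterating the increasing (continuous) map $\Xi^{\star}$ from $\rho^{\ncalS}$ produces a componentwise nondecreasing sequence bounded above by $\nb1$ (each $g_k<1$), which converges to a fixed point $\ge\rho^{\ncalS}$, necessarily $\rho^{\star}$; hence $\rho^{\star}\ge\rho^{\ncalS}$, i.e.\ $\{\ncalS_k(1)\}$ jointly maximizes all availabilities. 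For the strictness, if some tier $k_0$ uses $\ncalS_{k_0}(i)$ with $i>1$ with positive probability, then in the coupling the $\ncalS$-process spends positive time strictly above the $\ncalS_{k_0}(1)$-process (during the extra OFF idling while its level climbs from $1$ to $i$), and exhibiting one positive-probability block of clock ticks — a run of harvests with no intervening consumptions started from a configuration in which $\ncalE_{k_0}^{\ncalS}$ is already high while $\ncalE_{k_0}^{(1)}=0$ — shows $\pi_{k_0}^{\ncalS}(\mathrm{full})>\pi_{k_0}^{(1)}(\mathrm{full})$ strictly; then $\Xi^{\star}_{k_0}(\rho^{\ncalS})>\rho^{\ncalS}_{k_0}$, and since $\Xi^{\star}$ is strictly increasing and coupled across all tiers (every coefficient positive), the iteration forces $\rho^{\star}>\rho^{\ncalS}$ in every coordinate.

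I expect the main obstacle to be making the general-strategy parts fully rigorous: (i) the energy-balance identity needs the consumption mechanism pinned down and a PASTA/ergodic-average justification valid for the whole (timer-based and mixed) strategy class, not just nice Markov policies; and (ii) the strictness in the last step needs a genuine irreducibility-type argument — the soft pathwise monotonicity that yields the non-strict inequality does not by itself exclude $\pi_{k_0}^{\ncalS}(\mathrm{full})=\pi_{k_0}^{(1)}(\mathrm{full})$. The coupling and the monotone-fixed-point comparison are routine once the identity is in place.
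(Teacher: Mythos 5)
Your proposal is correct in its essentials but follows a genuinely different route from the paper. The paper never leaves the threshold class: it notes from \eqref{eq:rho_k_intON} that $\rho_k$ is increasing in $\E[J_{k_1}(N_{kc})]/N_{kc}$, computes $\E[J_{k_1}(i)]$ in closed form via \eqref{eq:meanONgen} and \eqref{invB_k_i_j}, checks that the ratio $\E[J_{k_1}(i)]/i$ is uniquely maximized at $i=1$, handles the cross-tier coupling qualitatively through Remark~\ref{rem:effectothers} (a less available tier raises the load, hence lowers the availability, of every other tier), and disposes of timer-based and mixed rules with the informal claim that any such rule is a mixture of $\ncalS_k(i)$ with some $i>1$ occurring with positive probability. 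You instead prove the per-tier statement by a flow-balance identity $\rho_k=(\mu_k/\nu_k)\bigl(1-\pi_k(\mathrm{full})\bigr)$ together with a pathwise coupling showing that ``serve whenever $\ncalE_k\ge 1$'' minimizes the time spent at full battery for fixed $\nu_k$, which covers energy-based, timer-based and mixed strategies uniformly and thus avoids the paper's mixture heuristic; and you formalize the cross-tier step as a monotone-map comparison ($\Xi^{\star}\ge\Xi^{\ncalS}$ pointwise, monotone iteration up to the unique positive fixed point of Theorem~\ref{thm:energyconservation}), which is a tighter version of what the paper argues in words via Remark~\ref{rem:effectothers}. What the paper's explicit hitting-time computation buys, and your route does not yet deliver, is precisely the strict part of the theorem: strictness of $\E[J_{k_1}(i)]/i<\E[J_{k_1}(1)]$ for $i>1$ falls out of the closed form, whereas in your version the strict inequality $\pi_{k_0}^{\ncalS}(\mathrm{full})>\pi_{k_0}^{(1)}(\mathrm{full})$ and the validity of the balance identity for history-dependent, non-Markov rules (existence of the long-run time averages, the PASTA-type step) are exactly the two points you flag yourself; those are where your write-up needs genuine additional work, though they are no more hand-waved than the corresponding steps in the paper's own proof.
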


\begin{proof}
From \eqref{eq:rho_k_intON}, note that the availability for a $k^{th}$ tier BS is maximized if $\nbbE[J_{k_1}(N_{kc})]/N_{kc}$ is maximized. Using \eqref{eq:meanONgen} and \eqref{invB_k_i_j}, it is straightforward to show that
\begin{align}
\arg \max_{1\leq i \leq N_k} \frac{\nbbE[J_{k_1}(i)]}{i} = 1. 
\end{align}
The proof now follows from the fact that if any tier follows strategy $\ncalS_{k}(i)$ ($i>1$) with a non-zero probability, its availability will be strictly lower than that of $\ncalS_{k}(1)$, which increases the effective load on other tiers and hence decreases their availabilities, as discussed in Remark~\ref{rem:effectothers}. Therefore, to jointly maximize the availabilities of all the tiers, each tier has to follow $\ncalS_k(1)$.

Now note that any strategy that is fully or partly based on a timer can be thought of as an arbitrary combination of $\ncalS_{k} (i)$, where $i>1$ with some non-zero probability. Hence the availabilities for such strategies are strictly lower than $\ncalS_{k}(1)$. 
\end{proof}

Using this result we now characterize the availability region for the set of general uncoordinated strategies.

\subsection{Availability Region}
We begin this subsection by formally defining the availability region as follows.
\begin{definition}[Availability region]
Let $\nfrakR^{\rm (UC)} \subset \nbbR^K$ be the set of availabilities $(\rho_1, \rho_2, \ldots, \rho_K) \in \nbbR^K$ that are achievable by a given uncoordinated strategy $\ncalS^{\rm (UC)}$. The availability region is now defined as
\begin{align}
\nfrakR = \cup_{\ncalS^{\rm (UC)}}   \nfrakR^{\rm (UC)},
\end{align}
where the union is over all possible uncoordinated strategies.
\end{definition}

From Theorem~\ref{thm:S1_superior} we know that the availabilities of all the tiers are jointly maximized if they all follow strategy $\ncalS_k(1)$. For notational ease, we define these maximum availabilities by $\rho^{\max} = (\rho^{\max}_1, \rho^{\max}_2 \ldots \rho^{\max}_K)$. This provides a trivial upper bound on the availability region as follows
\begin{align}
\nfrakR \subseteq \{\rho \in \nbbR^K: \rho_k \leq \rho^{\max}_k, \forall\ k \in \ncalK\},
\label{eq:achievability_ub}
\end{align}
which is simply an orthotope in $\nbbR^K$. Our goal now is to characterize the exact availability region as a function of key system parameters. As a by product, we will show that the upper bound given by \eqref{eq:achievability_ub} is rather loose. For cleaner exposition, we will refer to Fig.~\ref{fig:availability_region}, which depicts the exact availability region for a two-tier setup along with the bound given by~\eqref{eq:achievability_ub}. Before stating the main result, denote by $\rho_k^*(\{\rho_j\} \setminus \rho_k)$ the maximum availability achievable for the $k^{th}$ tier BSs, given the availabilities of the other $K-1$ tiers. It is clearly a function of $(\rho_1, \ldots \rho_{k-1}, \rho_{k+1}, \ldots, \rho_K)$. Following the notation introduced in \eqref{eq:fp_vec}, $\rho_k^*(\{\rho_j\} \setminus \rho_k)$ (denoted by $\rho_k^*$ for notational simplicity) can be expressed as
\begin{align}
\rho_k^* = g_k(\rho_1, \ldots \rho_k^*, \ldots, \rho_K),
\label{eq:rhok_star}
\end{align}
where $\rho_k^*$ is the solution to the fixed point equation given the availabilities of the other $K-1$ tiers. Recall that while defining $\rho_k^*$ in terms of $g_k$, we used Theorem~\ref{thm:S1_superior}, where we proved that strategy $\ncalS_k(1)$ maximizes availability for any given tier and  also leads to the same set of fixed point equations as given by~\eqref{eq:fp_vec}. In Fig.~\ref{fig:availability_region}, the solid line denotes $\rho_2^*(\rho_1)$, and the dotted line denotes $\rho_1^*(\rho_2)$. We remark on the achievability of the availabilities corresponding to these lines for $\rho_k \leq \rho_k^{\max}$, $\forall k \in \ncalK$ below.

\begin{remark}[Achievability of $\rho_1^*(\rho_2)$ and $\rho_1^*(\rho_2)$] \label{rem:achievability}
To show that for $\rho_1 \leq \rho^{\max}_1$, all the points on $\rho_2^*(\rho_1)$ are achievable, consider point $G = (\rho_1^G, \rho_2^G)$ in Fig.~\ref{fig:availability_region}. Given $\rho_2^G$, the maximum possible availability for first tier corresponds to point $H$ on $\rho_1^*(\rho_2^G)$, which further corresponds to strategy $\ncalS(1)$. Clearly $\rho_1^G \leq \rho_1^*(\rho_2^G)$, and hence achievable by some uncoordinated strategy. One option is to time share between $\ncalS(1)$ and a fixed timer that keeps a BS OFF despite having energy to serve users. The timer can be appropriately adjusted such that the effective availability is $\rho_1^G$. 
Likewise, all the points on $\rho_1^*(\rho_2)$ are also achievable. Clearly, this construction easily extends to general $K$ tiers.
\end{remark}

Using these insights, we now derive the exact availability region for the set of uncoordinated strategies in the following theorem.

\begin{figure}
\centering
\includegraphics[width=\columnwidth]{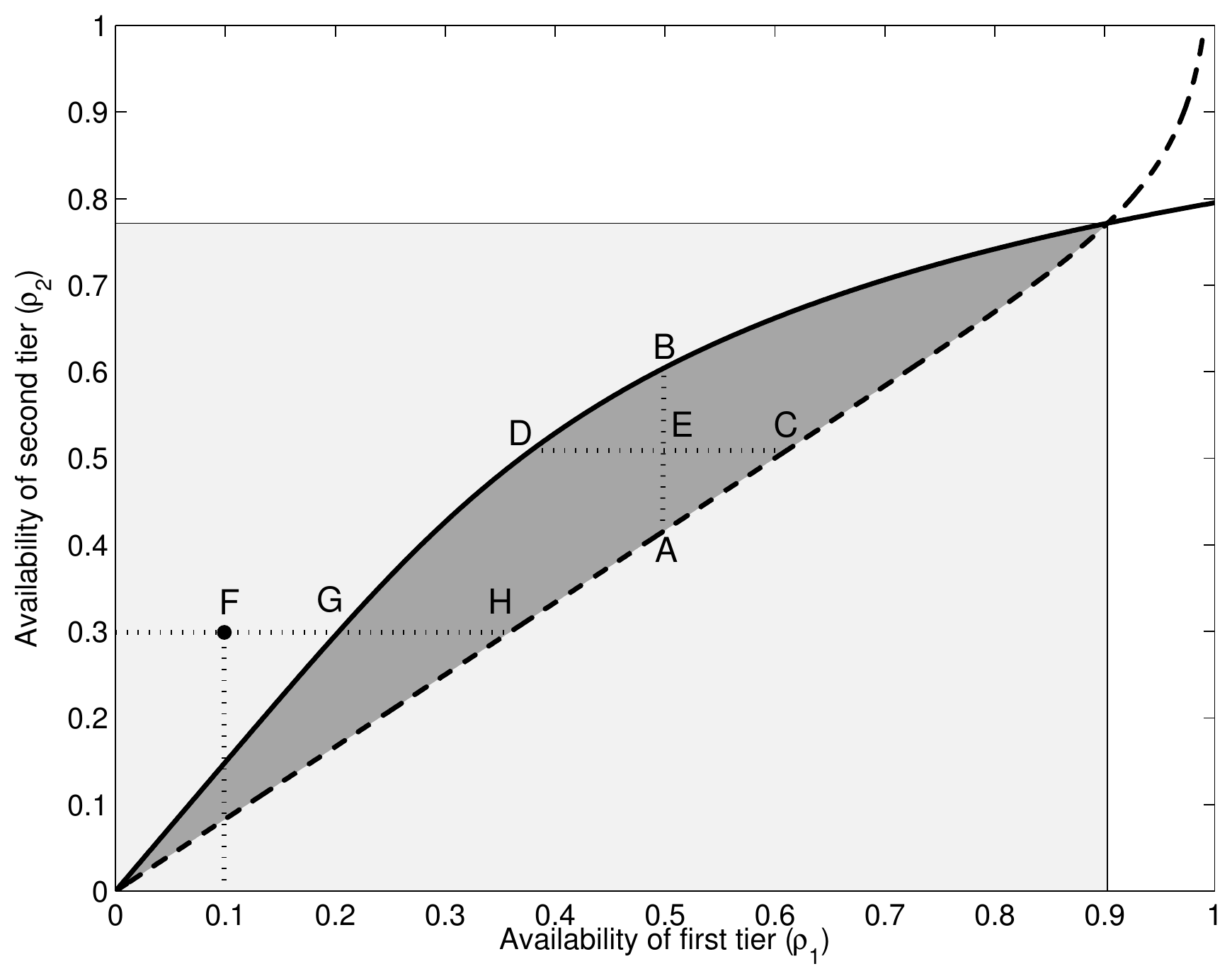}
\caption{Availability region for a two-tier HetNet. The upper bound and the exact availability regions are respectively highlighted in light and dark shades. Setup: $\alpha=4, K=2, N_1 = 10, N_2 = 8, \gamma=1.1, \mu_1=2, \mu_2=1, \lambda_2 = 10\lambda_1$, $m_1 = m_2$, $\sigma_1 = \sigma_2$.}
\label{fig:availability_region}
\end{figure}

\begin{theorem}[Availability region] \label{thm:availability_region}
The availability region for the set of general uncoordinated strategies is
\begin{align}
\nfrakR = \{\rho \in \nbbR^K: \rho_k \leq \rho_k^*(\{\rho_j\} \setminus \rho_k), \forall\ k \in \ncalK\}.
\label{eq:availability_region}
\end{align}
\end{theorem}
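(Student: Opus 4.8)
The plan is to establish the two inclusions $\nfrakR \subseteq \{\rho \in \nbbR^K : \rho_k \le \rho_k^*(\{\rho_j\}\setminus\rho_k)\ \forall k\}$ and its reverse, with the crux of both being a single scalar inequality per tier. First I would fix an arbitrary uncoordinated strategy profile and let $\rho = (\rho_1,\ldots,\rho_K)$ denote its stationary availability vector. For a fixed tier $k$, the realized pair (availability $\rho_k$, energy utilization rate $\nu_k$) is pinned down in two ways: $\nu_k$ is the function of $\rho$ given by Corollary~\ref{cor:energyutilizationrate}; and, since the energy level of a $k^{th}$ tier BS evolves as the birth--death process of Fig.~\ref{fig:BirthDeath_fig} with birth rate $\mu_k$ and death rate $\nu_k$ (which depends only on the realized availabilities, not on which strategy produced them), Theorem~\ref{thm:S1_superior} guarantees that no uncoordinated strategy can give tier $k$ an availability exceeding that of $\ncalS_k(1)$ driven by the same $\nu_k$, namely $1-(1-\mu_k/\nu_k)/(1-(\mu_k/\nu_k)^{N_k+1})$. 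Substituting the expression for $\nu_k$ from Corollary~\ref{cor:energyutilizationrate}, this upper value is exactly the right-hand side of \eqref{eq:fp_k}, i.e.\ $g_k(\rho_1,\ldots,\rho_k,\ldots,\rho_K)$. Hence every achievable $\rho$ satisfies $\rho_k \le g_k(\rho_1,\ldots,\rho_k,\ldots,\rho_K)$ for each $k$.

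Next I would convert this scalar inequality into the stated bound. Freezing $\{\rho_j\}_{j\neq k}$ and viewing $\tilde g_k(t) := g_k(\rho_1,\ldots,t,\ldots,\rho_K)$ as a map of $t=\rho_k$ alone, after absorbing the frozen terms it takes the one-variable form \eqref{eq:fp_k1} (a special case of \eqref{eq:func_fp}), so by Lemma~\ref{lem:properties} it is increasing, strictly concave, bounded above by $1$, with $\tilde g_k(0)\ge 0$. Under the standing over-provisioning assumption $\gamma>1$, the scalar equation $\tilde g_k(t)=t$ has a unique positive root by the argument of Theorem~\ref{thm:energyconservation} specialized to a single tier with the other tiers' loads held fixed, and this root is by definition $\rho_k^*(\{\rho_j\}\setminus\rho_k)$ of \eqref{eq:rhok_star}; strict concavity then forces $\tilde g_k(t)\ge t$ for $t\in[0,\rho_k^*(\{\rho_j\}\setminus\rho_k)]$ and $\tilde g_k(t)<t$ beyond it. Therefore $\rho_k \le \tilde g_k(\rho_k)$ is equivalent to $\rho_k \le \rho_k^*(\{\rho_j\}\setminus\rho_k)$, and intersecting over $k$ yields $\nfrakR \subseteq \{\rho : \rho_k \le \rho_k^*(\{\rho_j\}\setminus\rho_k)\ \forall k\}$.

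For the reverse inclusion I would exhibit, for any target $\rho$ in the right-hand set, an uncoordinated profile with stationary availability vector $\rho$, generalizing the construction of Remark~\ref{rem:achievability}. Let tier $k$ time-share on the long time scale between $\ncalS_k(1)$, run a fraction $p_k$ of the time, and a pure sleep timer that keeps the BS OFF despite having energy, run the remaining fraction; its long-run availability is then $p_k$ times the $\ncalS_k(1)$ availability under the prevailing load. If the induced stationary availabilities equal $\rho$, the loads are $\nu_k(\rho)$ from Corollary~\ref{cor:energyutilizationrate}, so tier $k$'s availability is $p_k\,\tilde g_k(\rho_k)$, and self-consistency demands $p_k=\rho_k/\tilde g_k(\rho_k)$. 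Since $\rho$ lies in the right-hand set, $\rho_k\le\rho_k^*(\{\rho_j\}\setminus\rho_k)$, which by the previous paragraph gives $\rho_k\le\tilde g_k(\rho_k)$, so $p_k\in[0,1]$ is a legitimate throttle; with these $p_k$, $\rho$ is a consistent stationary point of the profile, hence $\rho\in\nfrakR$. This proves the reverse inclusion, and together with the first part establishes \eqref{eq:availability_region}.

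The step I expect to be the main obstacle is handling the self-load feedback: $\nu_k$ depends on \emph{all} availabilities, including $\rho_k$ itself, so one cannot literally ``freeze the environment of tier $k$ and optimize'' it in isolation. The resolution is to observe that the realized pair $(\rho_k,\nu_k)$ must simultaneously satisfy the load equation of Corollary~\ref{cor:energyutilizationrate} and the $\ncalS_k(1)$-optimality bound of Theorem~\ref{thm:S1_superior}, which collapses the whole question to the single scalar inequality $\rho_k\le\tilde g_k(\rho_k)$ that both inclusions share. A secondary point that needs care is verifying that the one-tier fixed point $\rho_k^*(\{\rho_j\}\setminus\rho_k)$ is well defined (exists and is unique) for \emph{every} admissible choice of the other tiers' availabilities, and not merely at the global equilibrium; this follows from Lemma~\ref{lem:properties} and the one-tier version of Theorem~\ref{thm:energyconservation}, since freezing $\{\rho_j\}_{j\neq k}$ only adds a nonnegative constant inside $\tilde g_k$ and preserves monotonicity and concavity.
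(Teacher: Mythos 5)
Your proposal is correct and follows essentially the same route as the paper: the converse rests on the $\ncalS_k(1)$-optimality of Theorem~\ref{thm:S1_superior} combined with the fixed-point characterization $\rho_k^*$ of \eqref{eq:rhok_star}, and achievability rests on the throttling/time-sharing construction of Remark~\ref{rem:achievability}, exactly as in the paper's proof. Your write-up is somewhat more explicit than the paper's (the scalar inequality $\rho_k \le \tilde g_k(\rho_k)$ and its equivalence to $\rho_k \le \rho_k^*$ via monotonicity and concavity are argued analytically rather than read off the two-tier figure), but the underlying ideas coincide.
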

  
\begin{proof}
To show that $\nfrakR$ defined by \eqref{eq:availability_region} is in fact the availability region, it is enough to show that $\rho \in \nfrakR$ is achievable and $\rho \notin \nfrakR$ is not achievable. For ease of exposition, we refer to Fig.~\ref{fig:availability_region} and prove for $K=2$, with the understanding that all the arguments trivially extend to general $K$. To show that $\rho \in \nfrakR$ is achievable, consider point $E$ in Fig.~\ref{fig:availability_region}. This point is achievable by time sharing between strategies that achieve availabilities corresponding either to points $A$ and $B$ or $C$ and $D$, which are all achievable as argued in Remark~\ref{rem:achievability}. This clearly shows that there are numerous different ways with which $\rho \in \nfrakR$ is achievable. To show that the point $\rho \notin \nfrakR$ is not achievable, consider point $F = (\rho^F_1, \rho^F_2)$ in Fig.~\ref{fig:availability_region}. Note that given $\rho^F_1$, the maximum availability possible for second tier is constrained by the corresponding value $\rho_2^*(\rho_1^F)$ on the solid curve. Since $\rho^F_2 > \rho_2^*(\rho_1^F)$, it contradicts the fact that $\rho_2^*(\rho_1^F)$ is the maximum possible availability for second tier given $\rho_1^F$. Hence point $F$ is not achievable.
\end{proof}

\begin{figure}
\centering
\includegraphics[width=\columnwidth]{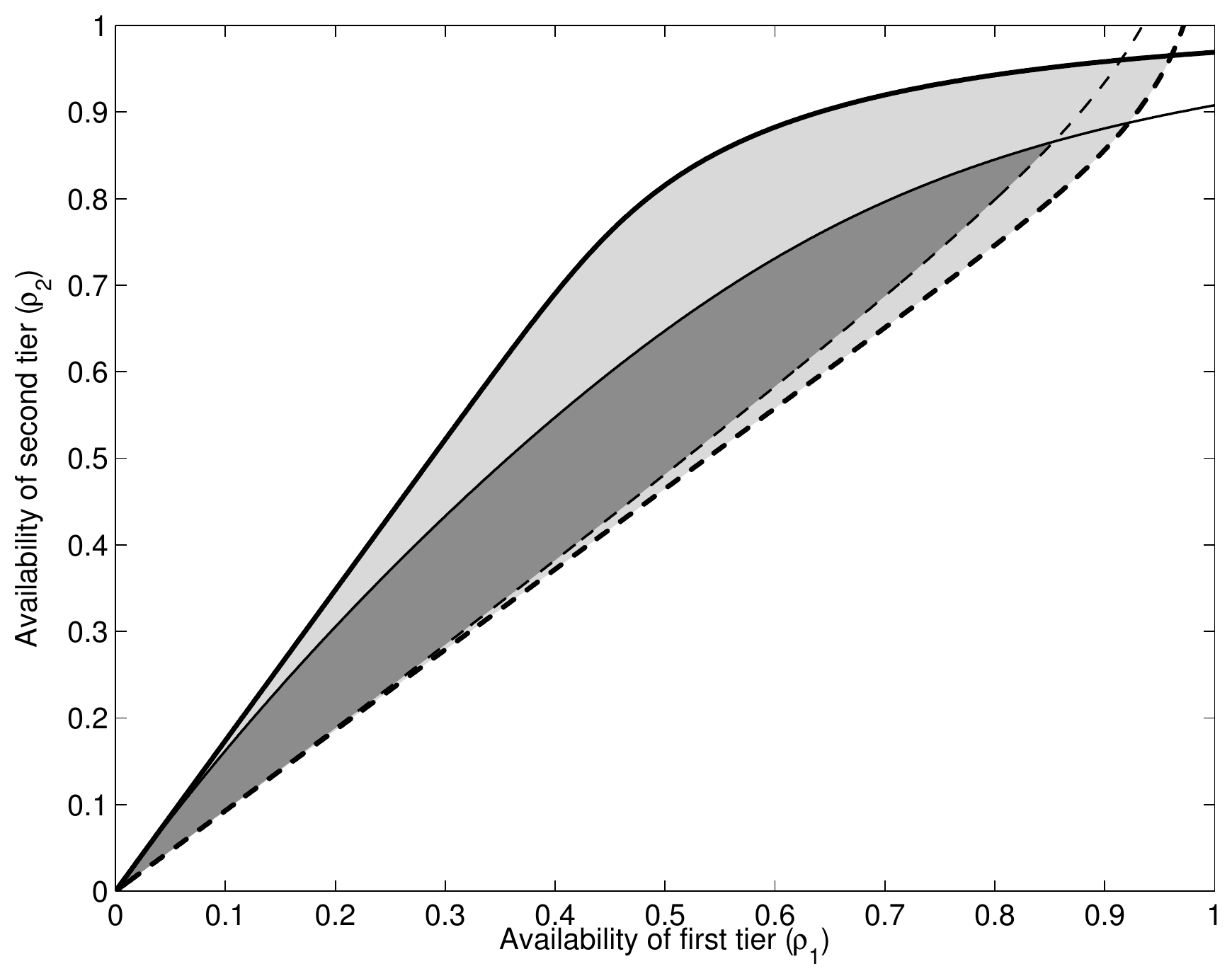}
\caption{Availability region for a two-tier HetNet is denoted by lightly shaded region. The availability region when one of the tiers is constrained to use $\ncalS_k(N_k)$ is denoted by the dark shade. Setup: $\alpha=4, K=2, N_1 = 20, N_2 = 15, \gamma=1.1, \mu_1=15, \mu_2=5, \lambda_2 = 10\lambda_1$, $m_1 = m_2$, $\sigma_1 = \sigma_2$.}
\label{fig:availability_constrained}
\end{figure}

\begin{remark}[Effect of constraining the set of strategies on $\nfrakR$] \label{rem:availability_constrained}
Recall that $\rho_k^*$ given by~\eqref{eq:rhok_star} and used in defining the availability region $\nfrakR$ corresponds to fixed point solution for strategy $\ncalS(1)$. In principle, it is possible to restrict one of tiers to follow a particular strategy by defining $\rho_k^*$ as the fixed point solution for that strategy. For instance, we could define $\rho_k^*$ as a solution to the fixed point equation corresponding to strategy $\ncalS_k(N_k)$. Clearly, all the points $\rho \in \nfrakR$ will not be achievable in this setup. For a two tier setup, we plot the availability region for this case in Fig.~\ref{fig:availability_constrained}, along with the availability region $\nfrakR$ defined by Theorem~\ref{thm:availability_region}. Note that as expected the set of points achievable under this constrained setup is strictly contained in the availability region defined by Theorem~\ref{thm:availability_region}. 
\end{remark}

We conclude this subsection with two remarks about the ``optimality'' of the availability region.

\begin{remark}[Higher availability is not always better]
It is not always optimal in terms of certain performance metrics to operate the network in the regime corresponding to the maximum availabilities. We will validate this in Section~\ref{sec:numresults} in terms of the downlink rate. Interestingly, a similar idea, although applicable at a much smaller time scale, of intentionally making a macrocell ``unavailable'' on certain sub-frames can be used to improve downlink data rate by offloading more users to the small cells. This concept is called almost blank sub-frames (ABS) and was introduced as a part of enhanced inter-cell interference coordination (eICIC) in 3GPP LTE release 10~\cite{3GPPM2012a}. While this is an interesting analogy, the two concepts are not exactly the same because in addition to the differences in the time scales, ABS additionally assumes coordination across BSs.
\end{remark}

\begin{remark}[Notion of optimality]
The performance of a HetNet with energy harvesting is fundamentally the same as the one with a reliable energy source if for the given performance metric, the optimal availabilities $\hat{\rho}$ lie in the availability region, i.e., $\hat{\rho} \in \nfrakR$. For example, if $\hat{\rho}$ corresponds to point $E$ in Fig.~\ref{fig:availability_region}, the HetNet despite having unreliable energy source will achieve ``optimal'' performance. On the other hand, if $\hat{\rho}$ is, say, point $F$ in Fig.~\ref{fig:availability_region}, there will be some performance loss due to unreliability in energy availability.
\end{remark}

We now study the coverage probability and downlink rate in the following subsection, which will be useful in the next section to demonstrate the above ideas about optimality.

\subsection{Coverage Probability and Downlink Rate} \label{sec:performance}
We now study the effect of BS availabilities $\{\rho_k\}$ on the downlink performance at small time scale. As described in Section~\ref{sec:sysmod}, the availabilities change on a much longer time scale and hence the operational states of the BSs can be considered static over small time scale. Therefore, for this discussion it is enough to consider the set of available BSs $\PPPa$. For downlink analysis, we focus on a typical user assumed to be located at the origin, which is made possible by Skivnyak's theorem~\cite{StoKenB1995}. Assuming full-buffer model for inter-cell interference~\cite{DhiGanJ2012}, i.e., all the interfering BSs in $\PPPa$ are always active, the $\sir$ at a typical user when it connects to a BS located at $x \in \PPPak$ is
\begin{align}
\sir(x) = \frac{P_k h_{kx}^{(0)} \ncalX_{kx}^{(0)} \|x\|^{-\alpha}}{\sum_{j \in \ncalK} \sum_{z \in \PPPaj\setminus \{x\}} P_j h_{jz}^{(0)} \ncalX_{jz}^{(0)} \|z\|^{-\alpha}}.
\end{align}
Using tools developed in~\cite{DhiAndJ2013}, Theorem 1 of~\cite{JoSanJ2012} can be easily extended to derive the coverage probability under the general cell selection model of this paper, which additionally incorporates the effect of shadowing. Since the extension is straightforward, the proof is skipped.
\begin{theorem}[Coverage] \label{cor:Pc}
The coverage probability is
\begin{align}
\pc = \nbbP(\sir(x^{*(0)}) > \T) = \frac{1}{1+\ncalF(\T, \alpha)},
\label{eq:pc_int_limited}
\end{align}
where
\begin{align}
\ncalF(\T, \alpha) &= \left( \frac{2 \T}{\alpha - 2} \right) {}_2F_1 \left[1,1-\frac{2}{\alpha}, 2 - \frac{2}{\alpha}, -\T \right],
\end{align}
and ${}_2F_1[a,b,c,z] = \frac{\Gamma(c)}{\Gamma(b) \Gamma(c-b)} \int_{0}^1 \frac{t^{b-1} (1-t)^{c-b-1}}{(1-tz)^a} {\rm d} t$ denotes Gauss hypergeometric function.
\end{theorem}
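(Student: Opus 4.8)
The plan is to collapse the $K$-tier, shadowed model into a single equivalent Poisson process on $\R$ by the displacement (mapping) theorem, and then run the by-now-standard Rayleigh-fading/PGFL computation for $\sir$ coverage; throughout I assume $\alpha>2$, as the claimed formula requires.

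First I would, for each tier $k$, attach to every $x\in\PPPak$ its independent mark $P_k\ncalX_{kx}$ and map it to the ``inverse long-term received power'' point $\|x\|^{\alpha}/(P_k\ncalX_{kx})$. By the marking theorem followed by the mapping theorem, and superposing the independent tiers, this produces a PPP $\Psi=\{Y_i\}$ on $(0,\infty)$ with intensity measure $\Lambda((0,t])=a\,t^{2/\alpha}$, where $a=\pi\sum_{k=1}^{K}\rho_k\lambda_k P_k^{2/\alpha}\nbbE[\ncalX_k^{2/\alpha}]$ is finite precisely because of the assumption $\nbbE[\ncalX_k^{2/\alpha}]<\infty$ already used for cell selection. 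Since max-long-term-power association picks the smallest $Y_i$ (the two-stage candidate-then-max rule in the model is just the global minimum), writing $Y_{(1)}=\min_i Y_i$ and placing the typical user at the origin (Slivnyak), the $\sir$ becomes $(h_{(1)}/Y_{(1)})\big/\sum_{i\ge 2}h_i/Y_i$ with $h_i\stackrel{\mathrm{iid}}{\sim}\exp(1)$ independent of $\Psi$.

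Next I would condition on $\{Y_{(1)}=y\}$ and exploit the exponential fading: $\nbbP(h_{(1)}>s)=e^{-s}$ turns coverage into $\nbbE[e^{-\T y\sum_{i\ge 2}h_i/Y_i}]$, and $\nbbE[e^{-sh}]=1/(1+s)$ turns this into the product $\prod_{i:\,Y_i>y}(1+\T y/Y_i)^{-1}$ over the interferers. By the reduced Palm distribution of a PPP (Slivnyak), conditionally on $Y_{(1)}=y$ the interferers form a PPP with intensity $\Lambda$ restricted to $(y,\infty)$, so the probability generating functional gives $\nbbP(\sir>\T\mid Y_{(1)}=y)=\exp(-\int_y^{\infty}\frac{\T y}{t+\T y}\,\Lambda(\mathrm dt))$; with $\delta:=2/\alpha$ the substitution $t=yu$ reduces the exponent to $a\,y^{\delta}C(\T)$ where $C(\T):=\delta\T\int_1^{\infty}u^{\delta-1}(u+\T)^{-1}\mathrm du$, all the $y$-dependence sitting in the prefactor. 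Deconditioning with $\nbbP(Y_{(1)}>y)=e^{-\Lambda((0,y])}=e^{-ay^{\delta}}$ and the change of variable $v=ay^{\delta}$ yields $\pc=\int_0^{\infty}e^{-v(1+C(\T))}\mathrm dv=(1+C(\T))^{-1}$; the constant $a$ cancels, which is exactly why $\pc$ is independent of $\{\lambda_k\}$, $\{P_k\}$, the shadowing laws, the availabilities $\{\rho_k\}$, and even $K$ (Remark~\ref{rem:Pc}). Finally, the substitution $u\mapsto 1/u$ rewrites $C(\T)$ as $\delta\T\int_0^1 u^{-\delta}(1+\T u)^{-1}\mathrm du$, which is Euler's integral representation of ${}_2F_1[1,1-\delta,2-\delta,-\T]$ divided by $\Gamma(2-\delta)/\Gamma(1-\delta)=1-\delta$; hence $C(\T)=\frac{\delta\T}{1-\delta}\,{}_2F_1[1,1-\delta,2-\delta,-\T]$, and $\frac{\delta}{1-\delta}=\frac{2}{\alpha-2}$ identifies $C(\T)$ with $\ncalF(\T,\alpha)$, giving $\pc=1/(1+\ncalF(\T,\alpha))$.

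The computation is routine; the only delicate point is the Palm-calculus bookkeeping. Rigorously I would invoke the reduced Campbell--Mecke formula to justify that conditioning on ``the nearest mapped point is the serving BS and is removed from the interference field'' coincides with conditioning on a point at $y$, deleting it, and integrating $y$ against the void-probability density of $Y_{(1)}$; one also has to note that $Y_{(1)}$ exists almost surely because $a<\infty$, and to track carefully which factors cancel so that the invariance claims (to $\{\rho_k\}$, to the shadowing distributions, and to $K$) are genuinely proved rather than merely asserted.
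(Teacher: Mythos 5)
Your derivation is correct: the reduction of the $K$-tier shadowed network to a one-dimensional propagation PPP with intensity $a\,t^{2/\alpha}$ via the marking/mapping theorem, the PGFL step under Rayleigh fading, the cancellation of $a$ after deconditioning on $Y_{(1)}$, and the identification of $C(\T)$ with $\ncalF(\T,\alpha)$ through Euler's integral all check out, including the invariance claims of Remark~\ref{rem:Pc}. The paper itself skips the proof and merely cites Theorem~1 of~\cite{JoSanJ2012} together with the displacement-theorem tools of~\cite{DhiAndJ2013}, and your argument is essentially that same route written out in full, so it supplies exactly the omitted extension rather than a genuinely different method.
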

Clearly, the coverage probability for interference-limited HetNets is independent of the densities of the available BSs, and hence of the availabilities $\{\rho_k\}$. This validates Remark~\ref{rem:Pc}. However, it is not necessarily so in the case of downlink rate distribution, which we discuss next. Assuming equal resource allocation across all the users served by a BS, the complimentary cumulative distribution function (CCDF) of rate $\ncalR$ (in bps/Hz) achieved by a typical user, termed {\em rate coverage} $\rc$, is calculated in~\cite{DhiAndJ2013} for the same cell selection model as this paper. Assuming the typical user connects to a $k^{th}$ tier BS, $\ncalR$ can be expressed as $\ncalR = \frac{1}{\Psi_k} \log(1 + \sir(x^{*(0)})$, where $\Psi_k$ is the number of users served by the $k^{th}$ tier BS to which the typical user is connected. The approach of~\cite{DhiAndJ2013} includes approximating the distribution of $\Psi_k$ and assuming it to be independent of $\sir(x^{*(0)})$ to derive an accurate approximation of $\ncalR$. With two minor modifications, i.e., the density of $k^{th}$ tier active BSs is $\rho_k \lambda_k$, and the effective density of active users is $\pc \lambda_u$, the result of~\cite{DhiAndJ2013} can be easily extended to the current setup and is given in the following theorem. For proof and other related details, please refer to Section III of~\cite{DhiAndJ2013}. This result will be useful in demonstrating the fact that the optimal downlink performance may not always correspond to the regime of maximum availabilities.  
\begin{theorem}[Rate CCDF] \label{thm:Rc}
The CCDF of downlink rate $\ncalR$ (in bps/Hz) or rate coverage $\rc$ is
\begin{align}
&\nbbP(\ncalR > \ncalT) = \sum_{n\geq 0} \frac{1}{1 +  \ncalF\left(\T_{n+1}, \alpha \right)} \sum_{k=1}^K  
 \frac{\rho_k  \lambda_k \nbbE \left[\ncalX_k^{\frac{2}{\alpha}} \right] P_k^{\frac{2}{\alpha}}}
{\sum\limits_{j\in\ncalK} \rho_j \lambda_j \nbbE \left[\ncalX_j^{\frac{2}{\alpha}} \right] P_j^{\frac{2}{\alpha}}   }  \nonumber   \\
&\times \frac{3.5^{3.5}}{n!} \frac{\Gamma(n+4.5)}{\Gamma(3.5)} \left(\frac{\pc\lambda_u \ncalP_k}{\rho_k \lambda_k} \right)^n \left(3.5 +  \frac{\pc\lambda_u \ncalP_k}{\rho_k \lambda_k}  \right)^{-(n+4.5)} \nonumber
\end{align}
where $\T_{n+1} = 2^{\ncalT (n+1)} -1$ and 
\begin{align}
\ncalP_k = \frac{ \rho_k \lambda_k \nbbE \left[\ncalX_k^{\frac{2}{\alpha}} \right]  P_k^{\frac{2}{\alpha}}}{\sum_{j \in \ncalK}  \rho_j  \lambda_j \nbbE \left[\ncalX_j^{\frac{2}{\alpha}} \right]  P_j^{\frac{2}{\alpha}}}.
\end{align}
\end{theorem}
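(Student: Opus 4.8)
\emph{Proof proposal.}
The plan is to follow the derivation of the rate distribution in~\cite{DhiAndJ2013}, tracking how the two structural modifications of the present setting---the density of active tier-$k$ BSs being $\rho_k\lambda_k$ rather than $\lambda_k$, and the density of active (control-covered) users being $\pc\lambda_u$ rather than $\lambda_u$---propagate through the argument. By Slivnyak's theorem we fix a typical user at the origin, and first condition on the tier of its serving BS. By the cell-selection rule and the displacement theorem---exactly as in the proof of Lemma~\ref{lem:avgservicearea}---the probability that the typical user is served by a tier-$k$ BS equals the association probability $\ncalP_k = \rho_k\lambda_k \E[|\calA_k|] = \rho_k\lambda_k\nbbE[\ncalX_k^{2/\alpha}]P_k^{2/\alpha} \big/ \sum_{j\in\ncalK}\rho_j\lambda_j\nbbE[\ncalX_j^{2/\alpha}]P_j^{2/\alpha}$, which is the first per-tier factor appearing in the claimed formula.

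Next, given that the typical user connects to tier $k$, write $\ncalR = \frac{1}{\Psi_k}\log_2(1+\sir(x^{*(0)}))$, where $\Psi_k$ counts the users sharing the tagged tier-$k$ BS, including the typical user. The key modeling step, imported from~\cite{SinDhiJ2013,DhiAndJ2013}, is the approximation of the PMF of $\Psi_k$: the tagged cell is size-biased (larger than a typical cell), the number of other active users falling in it is taken to be Poisson given its area, and the size-biased cell-area distribution is modeled by the standard gamma-type fit with shape parameter $3.5$. Carrying the density substitution $\lambda_u\mapsto\pc\lambda_u$ through this, the mean number of \emph{additional} users per tagged tier-$k$ cell is $\pc\lambda_u\E[|\calA_k|] = \pc\lambda_u\ncalP_k/(\rho_k\lambda_k) = \nu_k$, and one obtains $\nbbP(\Psi_k = n+1 \mid \text{tier } k) = \frac{3.5^{3.5}}{n!}\frac{\Gamma(n+4.5)}{\Gamma(3.5)}\big(\frac{\pc\lambda_u\ncalP_k}{\rho_k\lambda_k}\big)^n\big(3.5+\frac{\pc\lambda_u\ncalP_k}{\rho_k\lambda_k}\big)^{-(n+4.5)}$, i.e.\ the second per-tier/per-$n$ factor in the statement.

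Then I would condition further on $\Psi_k = n+1$ and invoke the (approximate) independence of $\Psi_k$ and $\sir(x^{*(0)})$ to reduce the conditional rate-coverage event to a conditional SIR-coverage event: $\nbbP(\ncalR>\ncalT \mid \text{tier }k,\,\Psi_k=n+1) = \nbbP(\sir(x^{*(0)}) > 2^{\ncalT(n+1)}-1) = \nbbP(\sir(x^{*(0)})>\T_{n+1})$ with $\T_{n+1}=2^{\ncalT(n+1)}-1$. Here I would use Theorem~\ref{cor:Pc} together with Remark~\ref{rem:Pc}: in the interference-limited regime the coverage probability is invariant to all the BS densities, so the only change needed relative to $\pc$ is to evaluate $\ncalF$ at the shifted threshold $\T_{n+1}$, giving $1/(1+\ncalF(\T_{n+1},\alpha))$. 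Assembling by the law of total probability over $k\in\ncalK$ and $n\geq 0$ yields the stated expression.

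The main obstacle is not any individual computation but the justification of the load-distribution approximation and its independence from the SIR---these are the two places where the result is an approximation rather than an identity; the honest approach (as in the paper's remark that ``the proof is skipped'') is to import them from~\cite{DhiAndJ2013,SinDhiJ2013} and to verify only that the two density replacements ($\lambda_k\to\rho_k\lambda_k$ for servers/interferers and $\lambda_u\to\pc\lambda_u$ for the active-user process) are the sole modifications required. Once that is granted, the remaining steps are routine bookkeeping that reproduce the displayed formula.
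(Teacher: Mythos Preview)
Your proposal is correct and matches the paper's approach exactly: the paper does not give a self-contained proof but simply states that the result of~\cite{DhiAndJ2013} carries over once the two density substitutions $\lambda_k\to\rho_k\lambda_k$ and $\lambda_u\to\pc\lambda_u$ are made, and refers the reader to Section~III of that reference for details. You have faithfully unpacked precisely those steps---association probability via the displacement/area argument, the $3.5$-parameter size-biased load approximation, independence of load and $\sir$, and the density-invariant coverage expression at the shifted threshold---so your write-up is, if anything, more explicit than what the paper itself provides.
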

\begin{remark} [Invariance to shadowing distribution]   \label{rem:shadowinvar_Rate}
From Theorem~\ref{thm:Rc}, we note that the rate coverage is invariant to the shadowing distribution when $\nbbE \left[\ncalX_j^{\frac{2}{\alpha}} \right] = \nbbE \left[\ncalX_k^{\frac{2}{\alpha}} \right]$, for all $j, k \in \ncalK$. This is similar to the observations made in Remark~\ref{rem:shadowinvar_EnUtil}. 
\end{remark}

\section{Numerical Results and Discussion} \label{sec:numresults}
Since most of the analytical results discussed in this paper are self-explanatory, we will focus only on the most important trends and insights in this section. For conciseness, we assume lognormal shadowing for each tier with the same mean $m$ dB and standard deviation $\sigma$ dB. Recall that both the energy utilization and the rate distribution results are invariant to shadowing under this assumption, as discussed in Remarks~\ref{rem:shadowinvar_EnUtil} and~\ref{rem:shadowinvar_Rate}. We begin by discussing the effect of battery capacity on the availability region.

\begin{figure}
\centering
\includegraphics[width=\columnwidth]{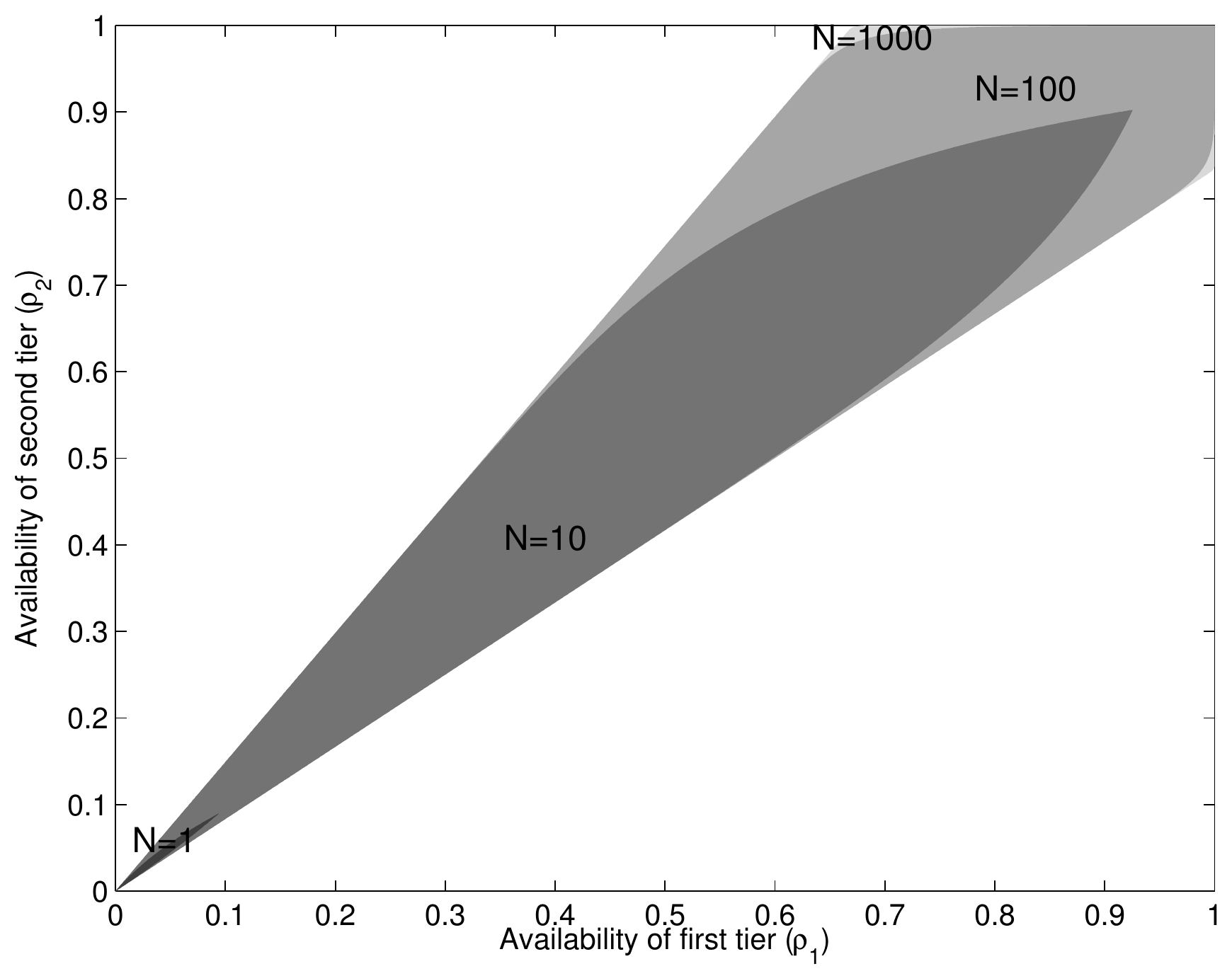}
\includegraphics[width=\columnwidth]{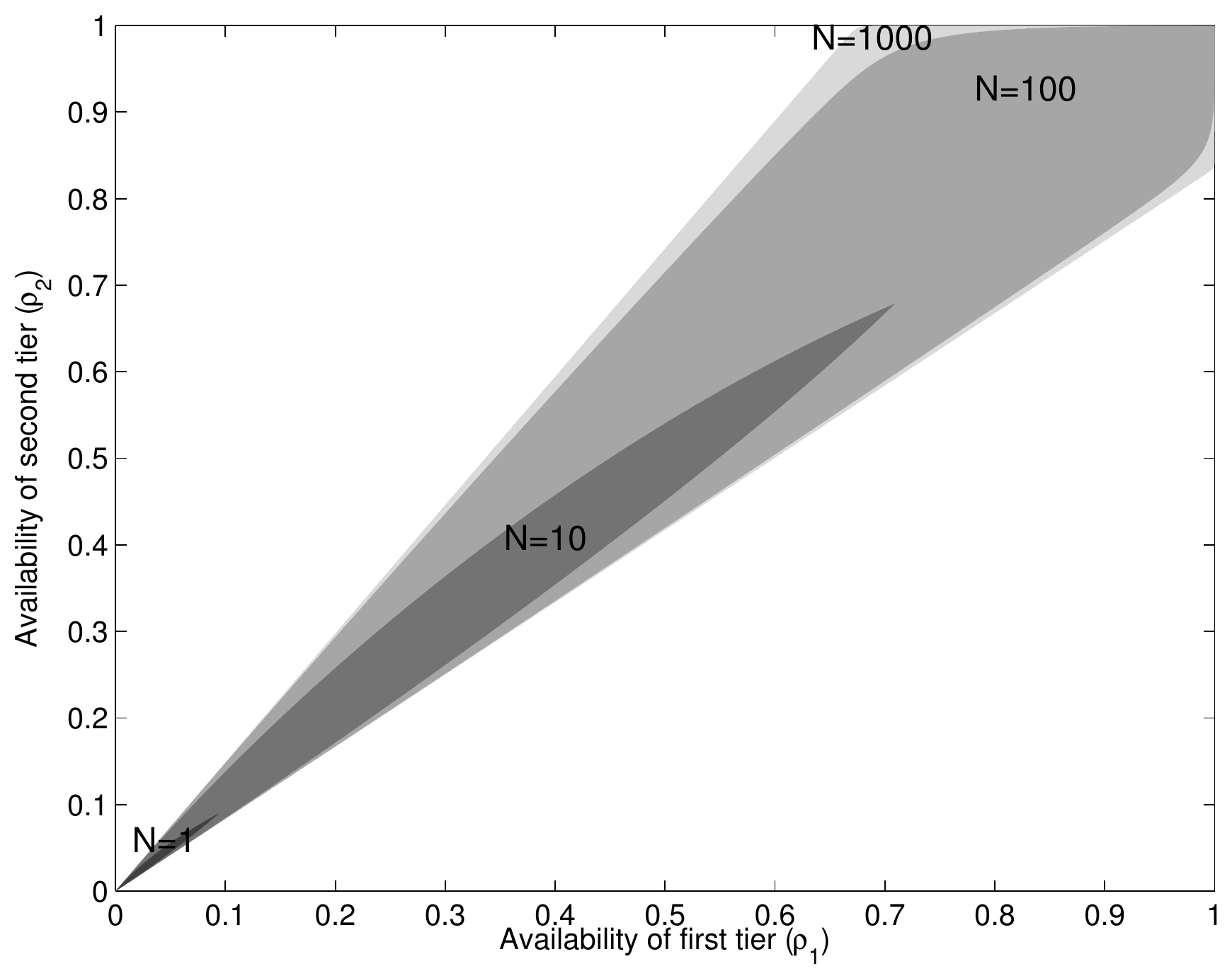}
\caption{{\em (first)} Availabilities region for various values of energy storage capacity $N$, where $N_1 = N_2 = N$. {\em (second)} One of the tiers constrained to use strategy $\ncalS_k(N)$. Setup: $\alpha=4, K=2, \gamma = 1.1, P = [1, 0.1], \mu_1=10, \mu_2=3, \lambda_2 = 10\lambda_1$.}
\label{fig:avail_N}
\end{figure}

\subsection{Effect of Battery Capacity on Availability Region}
We consider a two tier HetNet and plot its availability region for various values of the capacity of the energy storage module, i.e., battery capacity, in the first subplot of Fig~\ref{fig:avail_N}. For ease of exposition, we assume that the storage capacities of the BSs of the two tiers are the same. As expected, the availability region $\nfrakR$ increases with the increase in battery capacity. Interestingly, it is however not possible to achieve all the points $\rho$ in the square $[0,1]\times [0,1]$ even by increasing the battery capacity infinitely. The maximum availability region is a function of over-provisioning factor $\gamma$, which is set to $1.1$ for this result. Additionally, we note that the maximum availabilities for both the tiers approach unity even at modest battery levels. We repeat the same experiment for the case when one of the tiers is constrained to use the strategy $\ncalS_k(N)$ and present the results in the second subplot of Fig.~\ref{fig:avail_N}. Recall that this case was discussed in Remark~\ref{rem:availability_constrained}. We observe that for the same battery capacity $N$, the achievable region is smaller in this case compared to Fig.~\ref{fig:avail_N}, which is consistent with the observations made in Section~\ref{sec:availability}. The difference is especially prominent for smaller values of battery capacity $N$.

\begin{figure}
\centering
\includegraphics[width=\columnwidth]{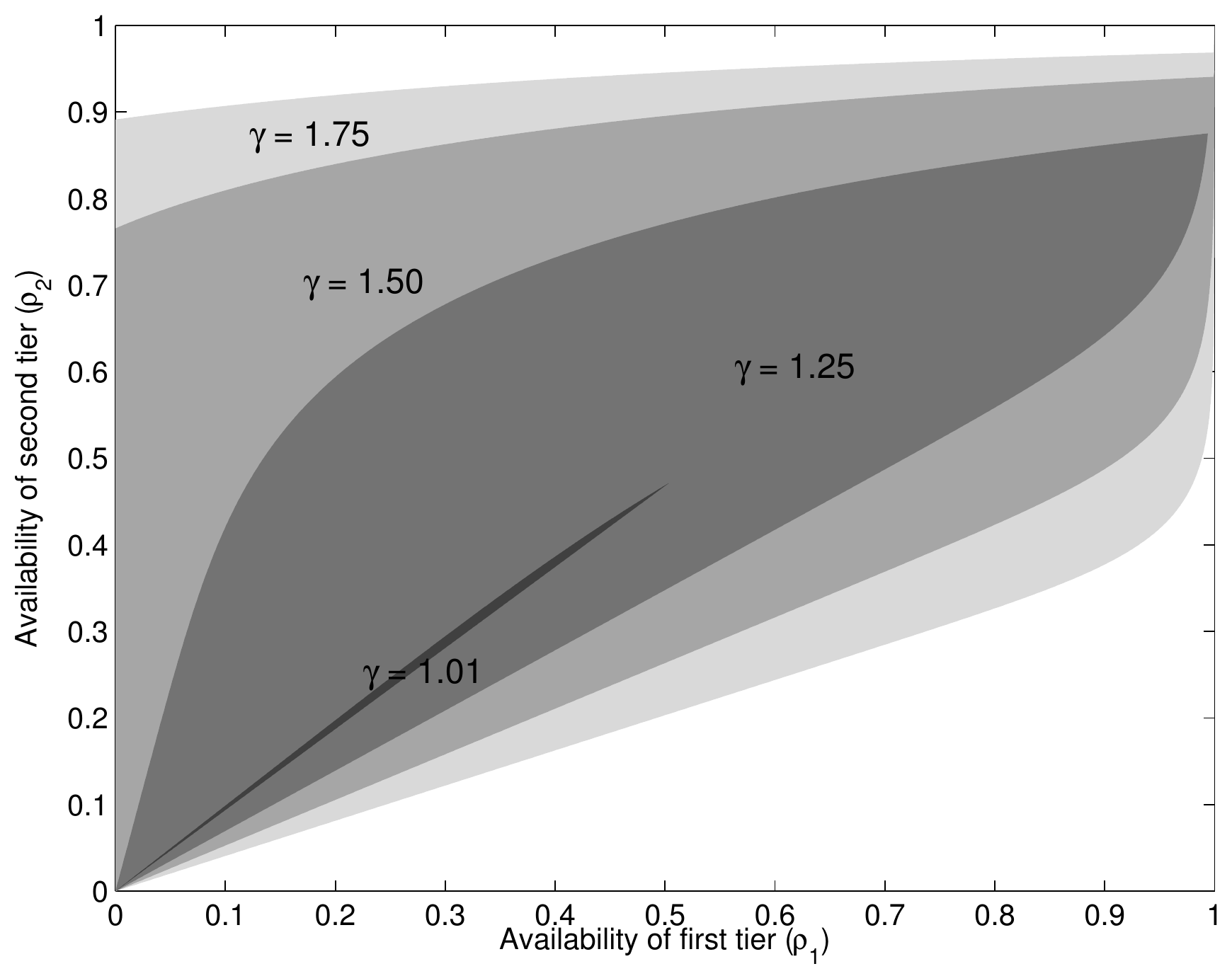}
\includegraphics[width=\columnwidth]{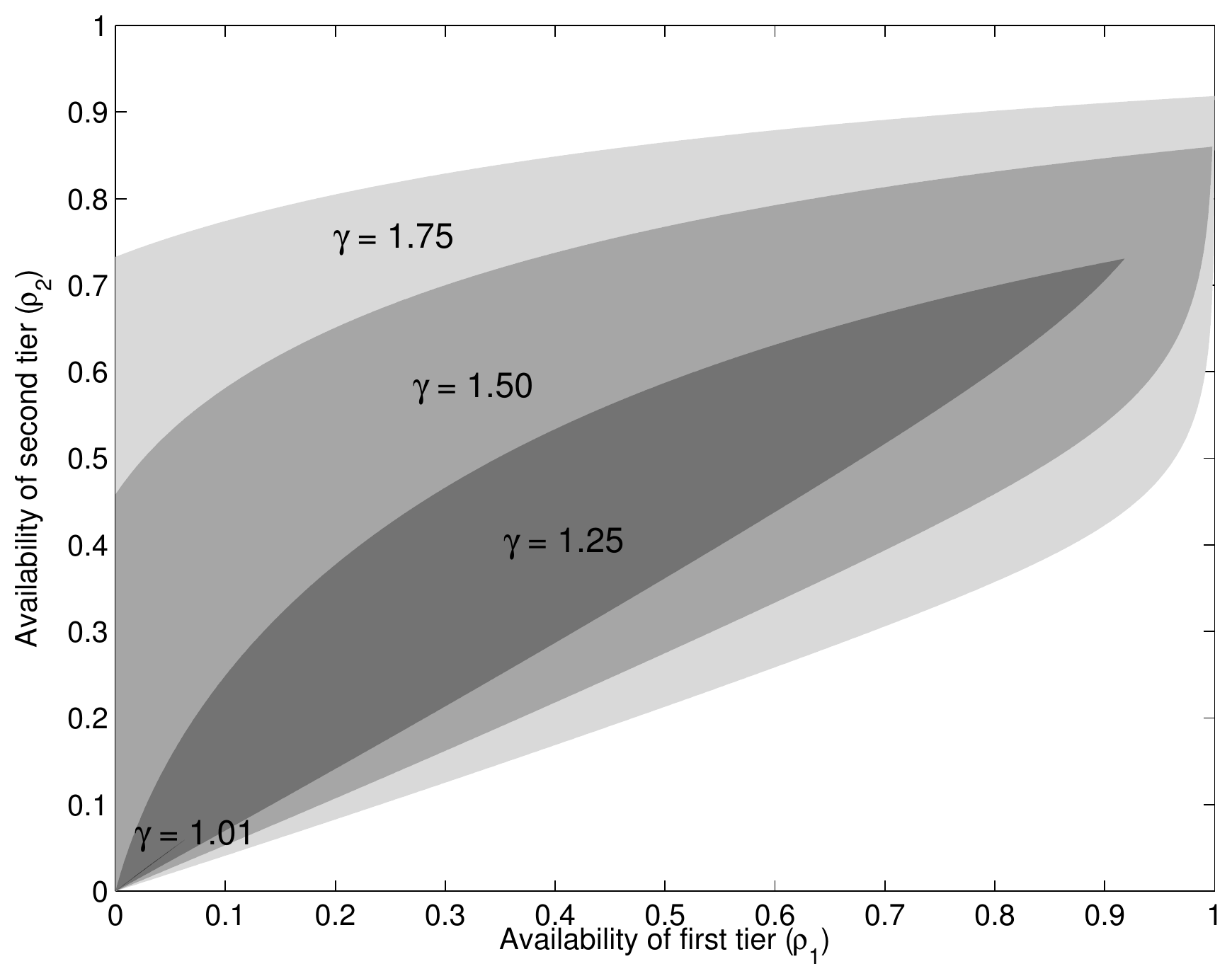}
\caption{{\em (first)} Availabilities region for various values of $\gamma$. {\em (second)} One of the tiers is constrained to use strategy $\ncalS_k(N)$. Setup: $\alpha=4, K=2, N_1 = 20, N_2 = 5, \mu_1=10, \mu_2=3, \lambda_2 = 10\lambda_1$.}
\label{fig:avail_Gamma}
\end{figure}

\subsection{Effect of Over-Provisioning Factor on Availability Region}
We now study the effect of the over-provisioning factor on the availability region in the first subplot of Fig.~\ref{fig:avail_Gamma}. Recall that the over-provisioning factor $\gamma$ is the ratio of the net energy harvested per unit area per unit time and the net energy utilized per unit area per unit time. The first and foremost observation is that unlike increasing battery capacity, the availability region expands by increasing $\gamma$ and will cover the complete square $[0,1]\times[0,1]$ for sufficiently large $\gamma$. Also note that the beyond a certain value of $\gamma$, the availability of a tier may be non-zero even if the availabilities of the other tiers are zero. This is the case when that tier harvests enough energy on its own to serve all the load offered to the network, i.e., $\lambda_k \mu_k > \pc \lambda_u$. As in the previous subsection, we now repeat this experiment under the constraint that one of the tiers follows strategy $\ncalS_k(N_k)$ and present the results in the second subplot of Fig.~\ref{fig:avail_Gamma}. As expected, the availability region is considerably smaller in this case.

\subsection{Rate coverage}
Using rate coverage, given by Theorem~\ref{thm:Rc}, we demonstrate that it may not always be optimal to operate the network in the regime corresponding to maximum availabilities. We plot rate coverage as a function of $(\rho_1, \rho_2)$ for two different setups in Fig.~\ref{fig:RcMesh1}. In both the cases, we note that it is strictly suboptimal to operate at the point $(\rho_1, \rho_2) = (1,1)$. Furthermore, as the second tier density is increased, it is optimal to keep first tier BSs OFF more often. As expected, the rate coverage also increases with the increase of second tier density. This example additionally motivates the need for the exact characterization of $\hat{\rho}$ for various metrics of interest, which forms a concrete line of future work. Once the optimal $\hat{\rho}$ for a given metric is known, the system designers can, in principle, design the energy harvesting modules such that $\hat{\rho} \in \nfrakR$. In such a case, the HetNet with energy harvesting will have fundamentally ``optimal'' performance, i.e., the same performance as the HetNet with reliable energy sources. 

\begin{figure}
\centering
\includegraphics[width=\columnwidth]{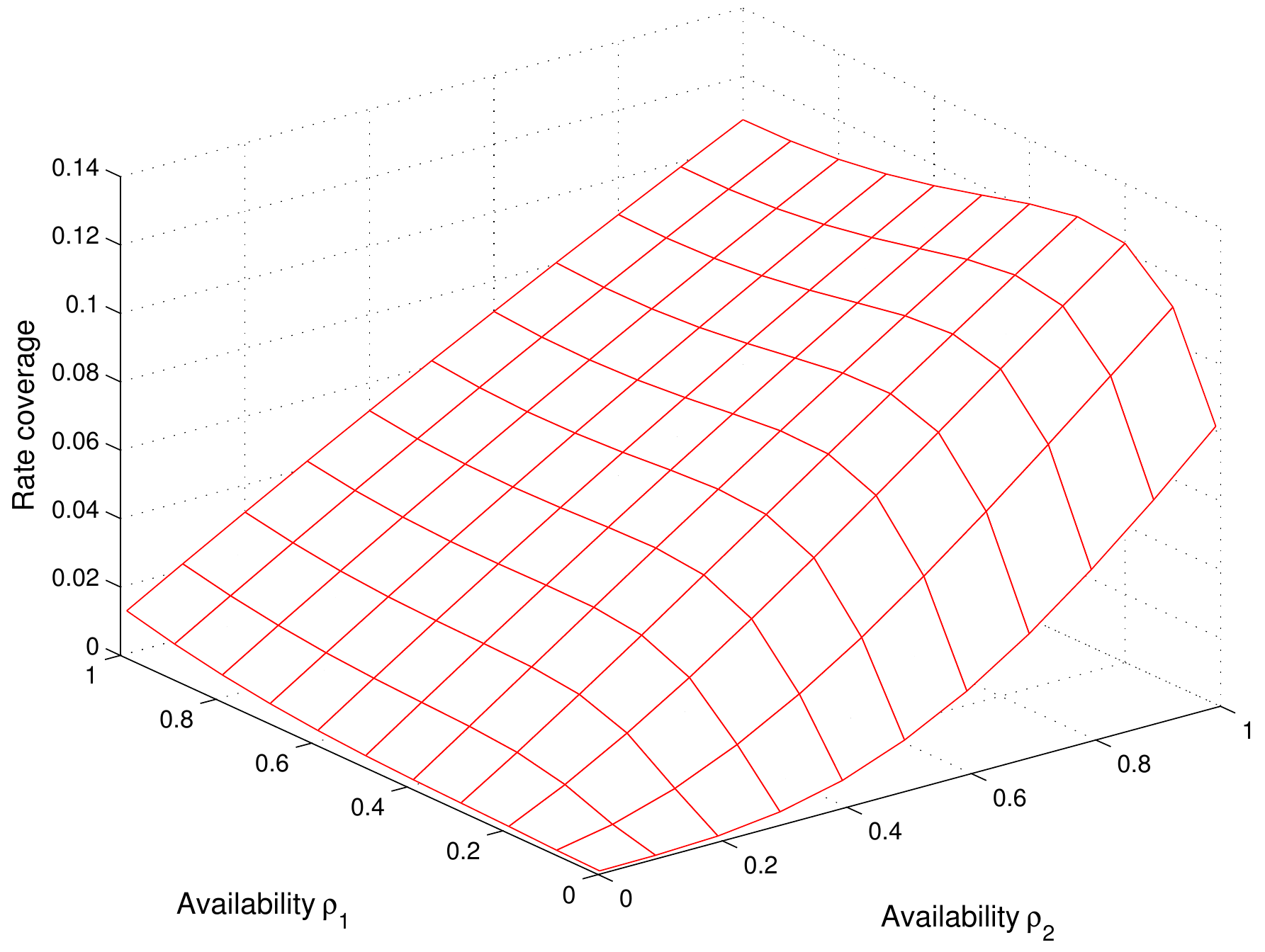}
\includegraphics[width=\columnwidth]{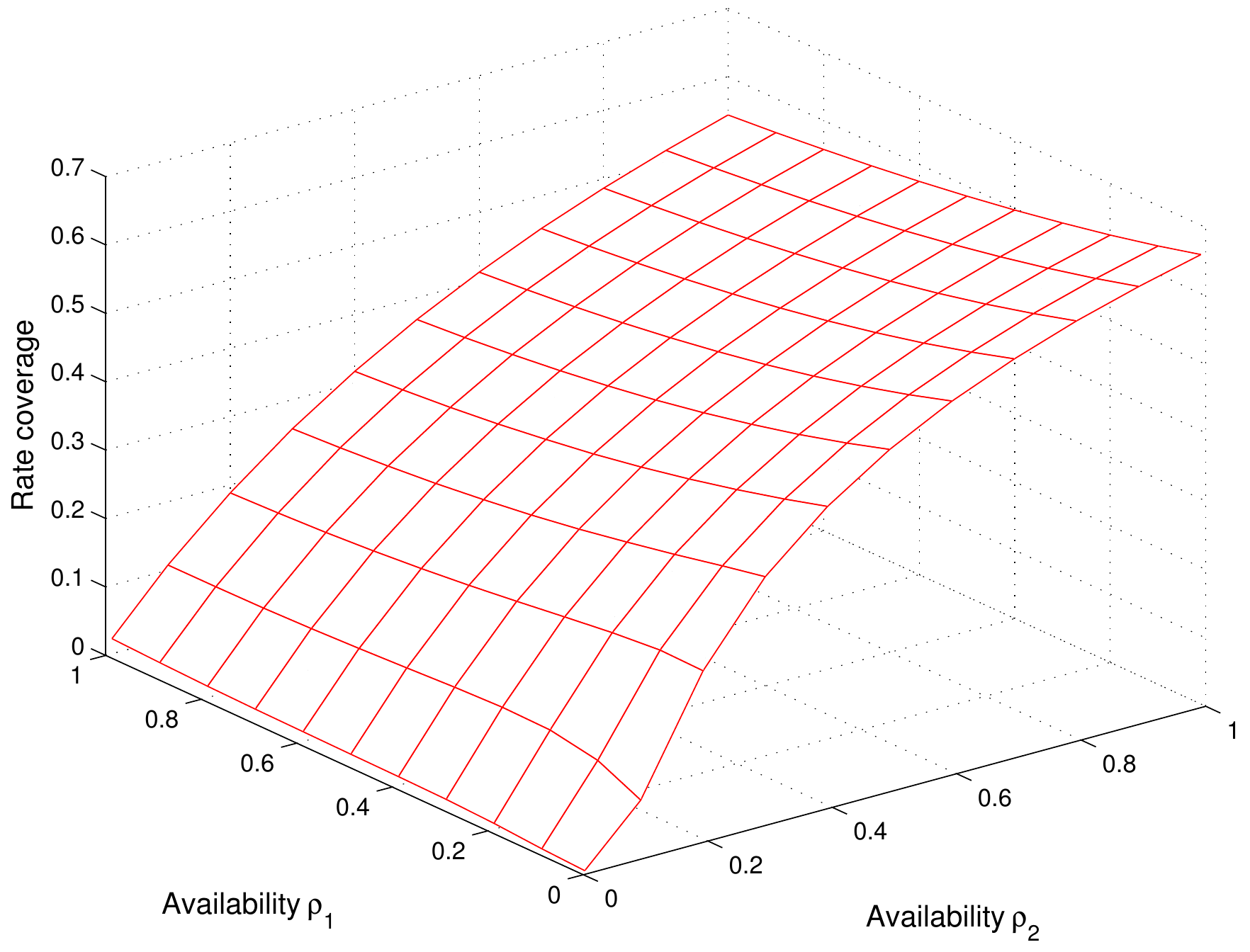}
\caption{Rate coverage as a function of $\rho_1$ and $\rho_2$. Setup: $\alpha=4, K=2, P = [1, 0.01], \ncalT = 0.1, \lambda_u = 100\lambda_1$. {\em (first)} $\lambda_2 = 2\lambda_1$, {\em (second)} $\lambda_2 = 20\lambda_1$. Note that $\rho=[1\ 1]$ is not optimal in both the cases.}
\label{fig:RcMesh1}
\end{figure}

\section{Conclusions}
In this paper, we have developed a comprehensive framework to study HetNets, where each BS is powered solely by its energy harvesting module. Developing novel tools with foundations in random walk theory, fixed point analysis and stochastic geometry, we quantified the uncertainty in BS availability due to the finite battery capacity and inherent randomness in energy harvesting. We further characterized the availability region for a set of general uncoordinated BS operational strategies. This provides a fundamental characterization of the regimes under which the HetNets with energy harvesting modules are fundamentally optimal, i.e., have the same performance as the ones with reliable energy sources. 

This work has many extensions. From modeling perspective, it is important to incorporate more accurate energy expenditure models taking into account energy spent on backhaul and control signaling, and model the energy required to transmit a packet to a user as a function of the $\sir$. It is also important to extend the developed framework to study coordinated strategies. From optimality perspective, it is important to characterize the optimal availabilities as a function of key system parameters for various metrics, such as the downlink rate, so that the energy harvesting modules can be accordingly designed. From physical layer perspective, more sophisticated transmission techniques, such as MIMO, should be taken into account, e.g., using tools developed in~\cite{DhiKouJ2013,LiZhaJ2013}. From cellular perspective, it is desirable to consider the effect of unreliable energy sources on uplink, e.g., using tools from~\cite{NovDhiJ2012}.

\appendix

\subsection{Proof of Lemma~\ref{lem:avgservicearea}} \label{Appendix:avgservicearea}
Denote by $\nbbP_{\PPPak}^{x_k}(\cdot)$ and $\nbbE_{\PPPak}^{x_k}[\cdot]$, the conditional (Palm) probability and conditional expectation, conditioned on $x_k \in \PPPak$. Please refer to~\cite{BacBlaB2009,BacBreB2003,StoKenB1995} for details on Palm calculus. Before we derive the average area, note that for given realizations of the BS locations and the channel gains, the area of the service region of the $k^{th}$ tier BS is
\begin{align}
|\ncalA_k(x_k)| = \int\limits_{\nbbR^2} \prod_{j \in \ncalK} \prod_{x \in \PPPaj } \nb1 \left(\frac{P_k \ncalX_k^{(z)}} { \|x_k - z\|^{\alpha}} \geq \frac{P_j \ncalX_j^{(z)}} { \|x - z\|^{\alpha} } \right) {\rm d} z.
\end{align}
The average service area can now be expressed as
\begin{align}
\nbbE[|\ncalA_k(x_k)|] &\stackrel{(a)}{=} \nbbE \nbbE_{\PPPak}^{x_k}[|\ncalA_k(x_k)|]\\
&\stackrel{(b)}{=} \nbbE \nbbE_{\PPPak}^{0}[|\ncalA_k(0)|] \stackrel{(c)}{=} \nbbE \nbbE_{\PPPak}[|\ncalA_k(0)|], \label{eq:averagearea_int1}
\end{align}
where $(a)$ follows by distributing the expectation over the point process $\PPPak$ and the rest of the randomness, $(b)$ follows from the stationarity of the homogeneous PPP, and $(c)$ follows from Slivnyak's theorem~\cite{StoKenB1995}. Substituting the expression for $|\ncalA_k(0)|$ in \eqref{eq:averagearea_int1} and distributing the expectation across various random quantities, we can express the average area as $\nbbE[|\ncalA_k(x_k)|] =$
\begin{align}
\nbbE_{\ncalX_k} \int\limits_{\nbbR^2} \prod_{j \in \ncalK} \nbbE_{\PPPaj} \prod_{x \in \PPPaj } \nbbE_{\ncalX_j} \nb1 \left(\frac{P_k \ncalX_k} { \|z\|^{\alpha}} \geq \frac{P_j \ncalX_j} { \|x - z\|^{\alpha} } \right) {\rm d} z,
\end{align}
where the expectations over point processes $\PPPaj$ and shadowing gains $\ncalX_j$ can be moved inside respective product terms due to independence, and superscript on $\ncalX_k^{(z)}$ and $\ncalX_j^{(z)}$ are removed for notational simplicity. The expectation over point process $\PPPaj$ can be evaluated using the probability generating functional (PGFL)~\cite{StoKenB1995}, which simplifies the average area expression to
\begin{align}
\nbbE_{\ncalX_k} \int\limits_{\nbbR^2} \prod_{j \in \ncalK} e^{ -\rho_j \lambda_j \nbbE_{\ncalX_j} \int_{\nbbR^2} \nb1 \left(\frac{P_k \ncalX_k} { \|z\|^{\alpha}} < \frac{P_j \ncalX_j} { \|x - z\|^{\alpha} } \right) {\rm d}x } {\rm d} z.
\end{align}
Solving the integral in the exponential, we get
\begin{align}
\nbbE_{\ncalX_k} \int\limits_{\nbbR^2} \prod_{j \in \ncalK} \exp\left( -\rho_j \lambda_j \pi \|z\|^2 \frac{P_j \nbbE \left[\ncalX_j^{\frac{2}{\alpha}} \right]}{P_k \ncalX_k^{\frac{2}{\alpha}}}  \right) {\rm d} z,
\end{align}
which can be equivalently expressed as
\begin{align}
\nbbE_{\ncalX_k} \int\limits_{\nbbR^2}  \exp\left( -\rho_j \lambda_j \pi \|z\|^2 \frac{\sum_{j \in \ncalK} P_j \nbbE \left[\ncalX_j^{\frac{2}{\alpha}} \right]}{P_k \ncalX_k^{\frac{2}{\alpha}}}  \right) {\rm d} z,
\end{align}
from which the result follows by solving the integral and taking expectation with respect to $\ncalX_k$. 
\hfill \QED

\subsection{Proof of Lemma~\ref{lem:properties}} \label{Appendix:properties}
Since both the properties (monotonicity and concavity) are element-wise properties, it is enough to consider the given function as a function of single variable $x \in \nbbR$. After dropping the subscript $k$ and with slight overloading of notation, we denote this function as $g: \nbbR \rightarrow \nbbR$, which is
\begin{align}
g(x) = 1 - \left(\frac{1 - b - ax}{1-(b+ax)^N} \right),
\end{align}
where $b \in \nbbR_+$ is a constant when we study element-wise properties. We now do the following substitution $x + \frac{b}{a} \rightarrow x$, which just shifts the function along $x$-axis and hence neither impacts the monotonicity nor concavity of $g(x)$. The simplified expression is
\begin{align}
g(x) = 1 - \left(\frac{1 - ax}{1-(ax)^N} \right).
\end{align}
Note that although both the numerator and denominator of the second term in the above expression go to $0$ as $x \rightarrow \frac{1}{a}$, it is easy to show that the function is continuous at this point and the limit is
\begin{align}
\lim_{x \rightarrow \frac{1}{a}} g(x) = \frac{N-1}{N}.
\end{align}
To prove that the function is monotonically increasing, it is enough to show that the partial derivative with respect to $x$ is positive. The partial derivative can be expressed as
\begin{align}
g'(x) = \frac{a}{(1-(ax)^N)^2}\left((N-1)(ax)^N +1 - N(ax)^{N-1} \right),
\end{align}
It is easy to show that the term inside the bracket is positive except at $x = \frac{1}{a}$, where it has a minima and takes value $0$. Further, using L'H\^{o}pital's rule it is straightforward to show
\begin{align}
\lim_{x \rightarrow \frac{1}{a}} g'(x) = a \frac{N-1}{2N} > 0,
\end{align} 
which completes the proof for the monotonicity property. To show that the function is concave, we need to show that the double derivative with respect to $x$ is negative, which is
\begin{align}
&g''(x) = -a^2 N (ax)^{N-2}\frac{1-ax}{(1-(ax)^N)^3}\ \times \nonumber \\
&\left(\frac{(N-1)(1-(ax)^{N+1})}{1-ax} - \frac{ax(N+1)(1-(ax)^{N-1})}{1-ax} \right),
\end{align}
where the term inside the bracket is positive except at $x=\frac{1}{a}$, where it has a minima and takes value $0$. As in the case of the first derivative, it is easy to show using L'H\^{o}pital's rule that the limit at this point is
\begin{align}
\lim_{x \rightarrow \frac{1}{a}} g''(x) = -a^2 \frac{N^2-1}{6N} < 0,
\end{align}
which shows that the function is strictly concave for all $x \in \nbbR$. This completes the proof. \hfill \QED

\subsection{Proof of Lemma~\ref{lem:equivalence}} \label{Appendix:equivalence}
For the proof of \eqref{eq:sol_existence} $\Rightarrow$ \eqref{eq:sol_condition}, take the denominator of \eqref{eq:sol_existence} to the right hand side of inequality and multiply both sides by $\lambda_k$ to get
\begin{align}
\lambda_k \mu_k \sum\limits_{j =1}^K  \rho_j \lambda_j \nbbE \left[\ncalX_j^{\frac{2}{\alpha}} \right] P_j^{\frac{2}{\alpha}} > \rho_k \lambda_k \pc \lambda_u \nbbE \left[\ncalX_k^{\frac{2}{\alpha}} \right] P_k^{\frac{2}{\alpha}}, \forall k \in \ncalK. \nonumber
\end{align}
Now add all the $K$ inequalities, i.e., sum both sides from $k=1$ to $K$, which leads to \eqref{eq:sol_condition} and hence completes half of the proof. For the proof of \eqref{eq:sol_existence} $\Leftarrow$ \eqref{eq:sol_condition}, multiply both sides of~\eqref{eq:sol_condition} by $\sum_{j=1}^K  \rho_j \lambda_j  \nbbE \left[\ncalX_j^{\frac{2}{\alpha}} \right] P_j^{\frac{2}{\alpha}}$ to get

\begin{align}
\sum_{k=1}^{K} \lambda_k \mu_k  \sum_{j=1}^K  \rho_j \lambda_j  \nbbE \left[\ncalX_j^{\frac{2}{\alpha}} \right] P_j^{\frac{2}{\alpha}} > \sum_{k=1}^K \pc \lambda_u \rho_k \lambda_k  \nbbE \left[\ncalX_k^{\frac{2}{\alpha}} \right] P_j^{\frac{2}{\alpha}}.
\end{align}
Rearranging the terms we get

\begin{align}
\sum_{k=1}^K \lambda_k \left(\frac{\mu_k  \sum_{j=1}^K  \rho_j \lambda_j  \nbbE \left[\ncalX_j^{\frac{2}{\alpha}} \right] P_j^{\frac{2}{\alpha}}  }   {\pc \lambda_u \rho_k  \nbbE \left[\ncalX_k^{\frac{2}{\alpha}} \right] P_j^{\frac{2}{\alpha}} } - 1 \right) > 0.
\end{align}
Since $\lambda_k$ is arbitrary, for the above condition to always hold, we need the term inside the bracket to be positive for all $k \in \ncalK$. This set of conditions is the same as \eqref{eq:sol_existence} and hence completes the proof.
\hfill \QED

\bibliographystyle{IEEEtran}
\bibliography{EnergyHarvesting_Journal_v1.6}

\end{document}